\newtheorem{theorem}{Theorem}
\newtheorem{lemma}{Lemma}
\newcommand*\samethanks[1][\value{footnote}]{\footnotemark[#1]}
\author{Foo\thanks{University of Podunk, Timbuktoo}
\and Bar\samethanks
\and Baz\thanks{Somewhere Else}
\and Bof\samethanks[1]
\and Hmm\samethanks}
\title{The equilibrium allele frequency distribution 
for a population with reproductive skew
}
\author{Ricky Der\thanks{Department of Biology, University
of Pennsylvania, Philadelphia, PA 19104},
Joshua B. Plotkin\samethanks[1]}
\begin{document}

\maketitle

\begin{abstract} 

We study the population genetics of two neutral alleles under reversible
mutation in the $\Lambda$ processes, a population model that features a skewed
offspring distribution.  We describe the shape of the equilibrium allele
frequency distribution as a function of the model parameters.  We show that the
mutation rates can be uniquely identified from the equilibrium distribution, but
that the form of the offspring distribution itself cannot be uniquely identified.
We also introduce an infinite-sites version of the $\Lambda$ process, and we use
it to study how reproductive skew influences  standing genetic
diversity in a population.  We derive asymptotic formulae for the expected number
of segregating sizes as a function of sample size.  We find that the
Wright-Fisher model minimizes the equilibrium genetic diversity, for a given
mutation rate and variance effective population size, compared to all other
$\Lambda$ processes.
\end{abstract}

\newpage

\section{Introduction}

Many questions in population genetics concern the role of demographic
stochasticity in populations, and its interaction with mutation and selection in
determining the fates of allelic types.  The foundational work of Fisher, Wright,
Haldane, Kimura \cite{Fisher30,Wright31,Haldane,Kimura2} and others has been instrumental in shaping our intuition about the
powerful role that genetic drift plays in evolution, and especially its role in
maintaining diversity.  This classical theory, and the view of genetic drift as a
strong force, emanates from the Wright-Fisher model of replication in a population,
and its large-population limit, the Kimura diffusion \cite{Kimura55}.   The diffusion
approximation has been particularly well-studied, not only because it is mathematically
tractable, but also because it is robust to variation in many of the underlying model
details.  Many discrete population-genetic models, including a large number of
Karlin-Taylor and Cannings processes \cite{pmid16591161,Cannings74,Ewens04}, share the same diffusion limit as the
Wright-Fisher model, and they therefore exhibit qualitatively similar behavior.

Nevertheless, Kimura's classical diffusion is not appropriate in every circumstance.   Its central assumption is the absence of skew in the
reproduction process --- that is, the assumption that no single individual can contribute a sizable proportion to the composition of the population in
a single generation.  Several recent studies have suggested that this assumption may be violated in several species, especially in marine taxa but
also including many types of plants \cite{Beckenbach94,Hedgecock94}, whose mode of reproduction involves a heavy-tailed offspring distribution.

While the number of empirical studies on heavy-tailed offspring distributions is
limited, there is a rich mathematical theory to describe the  dynamics of
populations with heavy reproductive skew.  Beginning with Cannings' 1974 paper on
neutral exchangeable reproduction processes, this literature has led to
generalized notions of genetic drift, which subsume the traditional Wright-Fisherian
concept of drift.  The resulting forward-time continuum limits of such processes
generalize the Kimura diffusion. One tractable class of models are the so-called
$\Lambda$-Fleming-Viot processes, parameterized by a drift measure $\Lambda$.  The
corresponding backward-time, or coalescent theory, for such processes leads to the
$\Lambda$-coalescents, first defined by Pittman and others \cite{Pitman,Sagitov}.   Two
conspicuous features stand out in this more general theory: $\Lambda$ processes
may have discontinuous sample paths, which feature ``jumps'' in the frequency of an
allele, in contrast to the continuous sample paths of Kimura's diffusion.
Likewise, the coalescents of such processes typically exhibit multiple and even
simultaneous mergers, instead of the strictly binary mergers of the classical
Kingman coalescent.  

Although mathematical aspects of such population processes (such as their construction,
existence, uniqueness etc.) have already been described, the specific
population-genetic consequences of reproductive skew have only recently begun to
be worked out.  In many cases, the classical picture of population genetics must
be considerably enlarged to accommodate new phenomena --- see for example \cite{Mohle2} on
generalizations of the Ewens' sampling formula, \cite{EW4} on linkage disequilibrium in
processes with skewed offspring distributions, \cite{Birkner2,BirknerTPB2011} on inference and sampling in the $\Lambda$ coalescent, and \cite{Der2012}
on the fixation probability of an adaptive allele in the $\Lambda$ process.

The purpose of this paper is to study the stationary allele frequency
distribution for populations with reproductive skew, under neutrality.  When there
are a finite number of allelic types subject to mutation, allele frequencies
evolve to a unique stationary distribution, and our principle aim will be to
understand how this distribution depends on the form of reproductive skew, 
$\Lambda$, and how it may depart from the Wright-Fisherian picture.  

Whereas a closed-form expression exists for the stationary allele frequency
distribution in the (continuum) Wright-Fisher model, very few explicit expressions can be
obtained in the general case of an arbitrary $\Lambda$ drift measure.  Instead,
we study the stationary distribution indirectly, first by deriving a recurrence
relation satisfied by the moments in the two-allele scenario.  This relation
provides significant information about how the model parameters $(\theta,\Lambda)$
influence the stationary distribution.  In particular, we demonstrate that the
mutation parameters $\theta$ are identifiable from the stationary distribution,
whereas the form of drift, $\Lambda$, is not in general identifiable.

We also study how reproductive skew alters the standing genetic diversity in a
population at equilibrium.  Some numerical experiments of \citeN{Mohle2}, as well as some
asymptotic results of \citeN{Berestycki1} for the Beta-coalescent, have suggested that
the Wright-Fisher model tends to minimize standing diversity, compared to other offspring
distributions.  To analyze this behavior, we develop a $\Lambda$-version of
Kimura's infinite-sites model, and we study the mean number of segregating sites,
$\mathbb{E}S_n$, in a sample of size $n$.  This measure of genetic diversity is
robust in the sense that it is immune to many assumptions of the model and it
coincides with the mean number of segregating sites in other infinite-sites
models, including Watterson's fully linked infinite-sites model.  We demonstrate
that the Wright-Fisher model minimizes diversity amongst all $\Lambda$-processes
of the same variance-effective population size. In other words, reproductive skew
always tends to amplify standing genetic diversity, compared to the classical
population-genetic model.  We also derive a recursion formula for the mean number
of segregating sites, and we use this to obtain asymptotic formulae for the number
of segregating sites in large samples.

The remainder of the paper is structured as follows.  We start by reviewing
discrete population models under reproductive skew.  We then describe the
forward-time continuum limits of such processes, which can be identified as
$\Lambda$-Fleming-Viot processes.  We develop a recursion equation for the moments
of the stationary distribution in the two-allele case, and we use this to
determine the identifiability of model parameters.  To further examine
equilibrium diversity we then introduce a two-allele, infinite-sites model with
free recombination, and we study the frequency spectrum of samples from this
process.  This leads to a recursion formula for the mean number of segregating
sites, and theorems concerning the minimization and maximization of diversity among
 all $\Lambda$ measures.  We conclude by providing a simple intuition for
our results, and by placing them in the context of the large literature on
reproductive skew.

\section{Discrete population models with reproductive skew}

The $\Lambda$-models are generalizations of the classical Wright-Fisher and Moran
processes, which incorporate the possibility of large family sizes in the
offspring distribution.  The characteristics of these processes are most easily
understood by studying their continuum limits, described below.  Nonetheless, we
shall first describe these models and review their properties in a discrete
setting, along the lines of the treatment in \citeN{EldonWakeley}.

We consider a population containing a fixed number $N$ of individuals, each of two
types.  At every time step, a single individual is chosen uniformly from the
population and produces a random number $U$ offspring, drawn from a distribution
of offspring numbers, $P_U$.  The subsequent generation is then comprised of the
$U$ offspring from the chosen individual supplemented by $N-U$ other individuals,
randomly selected without replacement from the remainder of the population.  Only
a single individual contributes offspring in each reproduction event --- the
remaining individuals who neither contribute offspring nor die simply persist to
the next time step. 

When the offspring distribution $P_U$ is concentrated at two individuals, i.e.
$\mathbb{P}(U = 2) = 1$, this model coincides with the Moran process.  More
generally, we consider any discrete offspring number distribution $P_U$ supported
on the set $\{0,\ldots,N\}$.  

To incorporate mutation, an additional stage is appended after
reproduction wherein each individual may mutate to the opposing type,
independently and identically with a probability that depends upon the
individual's type, $\mu_1/N, \mu_2/N$, with $\mu_i \geq 0$.  This composite
process is graphically depicted in Figure 1.  We shall term this discrete process
a ``generalized'' Eldon-Wakeley model.

\begin{figure}[h]
  \centering
    \includegraphics[width=7in]{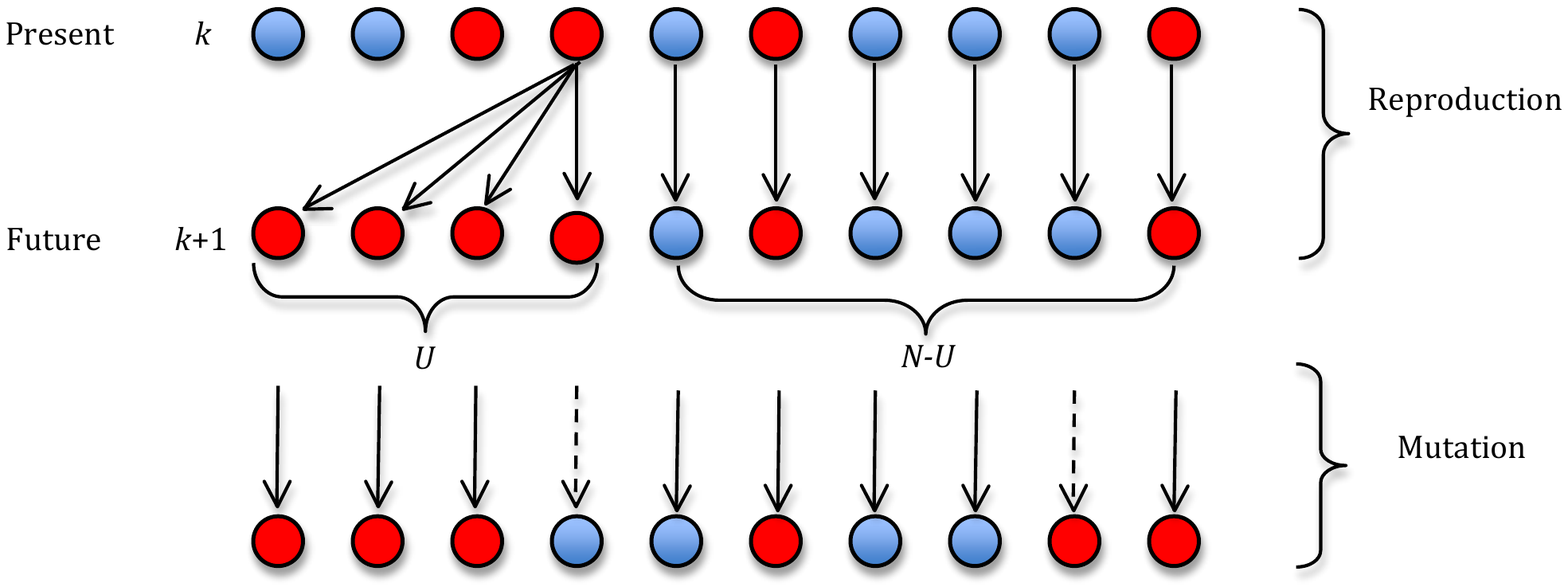}
      \caption{Schematic diagram of a discrete-time population model with reproductive skew.}
\end{figure}

\subsection{The transition matrix.}

We keep track of the number of individuals in generation $k$ of type 1, denoted by
$X_k$.  Since $X_k$ is a Markov chain on the states $\{0,\ldots,N\}$, it
possesses an associated transition matrix 
$\mathbf{P}_{ij}, 0 \leq i,j, \leq N$.  
The transition matrix $\mathbf{P}$ can be written as a product of 
two matrices,
\begin{equation}\label{matdecomp}
\mathbf{P} = \mathbf{Q} \mathbf{M}
\end{equation}
corresponding to the reproduction and mutation stages described above.
The rows of the mutation matrix $\mathbf{M}$ are sums of independent 
binomial distributions, representing the mutational flux from each class. 
When all mutation rates are zero $\mathbf{M} = \mathbf{I}$, the
identity matrix.  The matrix $\mathbf{Q}$ describes the neutral genetic 
drift due to reproduction alone in the $\Lambda$-model; its form is more 
complex, with rows that are mixtures of hypergeometric
distributions whose means depend on the offspring distribution $P_U$. 
An important quantity is the first row of $\mathbf{Q}$, called the ``offspring distribution''
of the process: $\mathbf{Q}_{1,j}$ for $j=0,1, \ldots N$.  The variance of 
this distribution, called the ``offspring variance" $\sigma^2_N$, determines the 
time-scaling of the continuum limit (see below).

\section{Continuum approximations of population models with reproductive skew}

Analysis of the Moran or Wright-Fisher model is facilitated by a continuum
limit, which becomes accurate in the limit of large population size $N \rightarrow
\infty$ \cite{Kimura55,Ewens04}.  As described in \cite{Der2011}, it is  possible to
derive a continuum limit for a significantly larger class of discrete population processes,
including the $\Lambda$ models, without restrictions on
the offspring distribution $P_U$. Similar work in the Cannings case has been developed by \citeN{Mohle01}.
While the limiting continuum processes are not, in general, diffusions with
continuous sample paths, they are still characterized by an operator $G$, the
infinitesimal generator of the continuum process, which reduces to the
second-order differential equation of Kimura in the classical Wright-Fisher case.  

Continuum approximations involve choosing how to scale time and space, as $N\rightarrow \infty$.  Such scalings replace the number $i$ of individuals with the frequency $x = i/b_N$, and the generation number $k$ by the time $t = kc_N$,
for some choices of sequences $\{b_N\}, \{c_N\}$.  The continuum limit is then the process
\begin{equation}
\tilde{X}_t = \lim_{N \rightarrow \infty} \frac{1}{b_N} X_{[t/c_N]}
\end{equation}
In the classical Moran model we use the scalings $b_N = N$ and $c_N = N^{-2}$.
In fact, it can be shown that the relationship between the space-scaling $b_N$ and
time-scaling $c_N$ is fixed, in the sense that no other relationship leads to
non-trivial limiting processes.  We wish study allele frequencies, and
hence impose the natural scaling $b_N = N$.  The general theory \cite{Mohle01,Der2011} then indicates that the time-scaling must be proportional to 
\begin{equation}\label{timescaling}
c_N = \frac{\sigma^2_N}{N}
\end{equation}
where $\sigma^2_N$ is the offspring variance of the $\Lambda$ process.

Once a time-scaling is fixed, then so is the appropriate scaling regime for the
mutation rates, in order to produce a non-trivial balance of mutation and drift. This scaling must satisfy:
\begin{equation} \label{murates} \mu_i = O(\sigma^2_N)
\end{equation}
In the classical Moran model $\sigma^2_N= 1/N$, which produces the traditional scaling of mutation rate $\mu_i = O(N^{-1})$.  In other models, such
as the models of \citeN{EldonWakeley}, where $\sigma^2_N = N^{-\gamma + 1}$ and $\gamma < 2$, mutation rates must scale faster to compensate for the increased rate of evolution from the drift process.

\subsection{The limiting process for a generalized Eldon-Wakeley model.}
By applying the techniques of \cite{Der,Mohle01}, one may derive the continuum limit
for the generalized Eldon-Wakeley process.  These limits are characterized by an operator $G$, and an associated Kolmogorov backward equation, analogous to the diffusion equation of Kimura.   
We assume that we have a sequence of Eldon-Wakeley models, one for each population size $N$, and each with offspring distribution $P_U^{(N)}$.  We assume the time-scaling and mutational constraints of (\ref{timescaling}) and (\ref{murates}) so that
\begin{equation}
 \theta_i = \lim_{N \rightarrow \infty} \frac{2\mu_i}{\sigma_N^2}
 \end{equation}
defines the effective population-wide mutation rate.   Under an appropriate condition on the sequence of offspring distributions $P_U^{(N)}$, there exists a limiting measure $\Lambda$ which may be derived from $\{P_U^{(N)}\}$ as:
\begin{align}
\Lambda  &= \lim_{N \rightarrow \infty} \Lambda_N\\
\Lambda_N(i/N) &= \left(\frac{i}{N}\right)^2 P_U^{(N)}(i), \qquad i = 0,\ldots,N
\end{align} 
and which characterizes the continuum limit.  Letting $\tilde{X}^{(N)}_t = \frac{1}{N} X_{[ t/c_N]}$ denote the time and state re-scaled process, one can show then that $\tilde{X}^{(N)}_t$ converges to a limiting 
process $\tilde{X}_t$ that satisfies the backward equation:

\begin{equation}\label{contrep}
\frac{\partial u(x,t)}{\partial t} = Gu(x,t), \qquad u(x,0)=f(x)
\end{equation}
where
\begin{equation}\label{lambdagen}
G u(x) =    \frac{1}{2}(-\theta_1 x + \theta_2 (1-x)) \frac{\partial u }{\partial x} + \int_0^1 \frac{xu(x+(1-x)\lambda) - u(x) + (1-x) u(x-\lambda x)}{\lambda^2}\, d\Lambda(\lambda)
\end{equation}
and where $u(x,t)=\mathbb{E}[f(\tilde{X}_t)|\tilde{X}_0 = x]$.

The Markov process whose generator $G$ is given by (\ref{lambdagen}) is called the forward-time, two-type $\Lambda$-Fleming-Viot process.

%

\subsection{Intuitive remarks on the generator.}

As with the matrix decomposition of (\ref{matdecomp}), the generator of
(\ref{lambdagen}) splits into two terms: a portion $ \frac{1}{2}(-\theta_1 x +
\theta_2 (1-x)) \frac{\partial }{\partial x}$ that describes mutation, independent
of the reproduction measure $\Lambda$, and an integral portion  that describes
genetic drift.  The term describing mutation coincides with the standard
first-order advection term in Kimura's diffusion equation.  The integral term
however, generally differs from the Kimura term, and it depends on the drift
measure $\Lambda$.

Throughout the remainder of this paper we distinguish several important families
of $\Lambda$-processes. We  define the \emph{pure $\Lambda$ processes}  to be
those models for which $\Lambda = \delta_\lambda$, the Dirac measure concentrated
at a single point $\lambda$, with $0 \leq \lambda \leq 1$.  Since
(\ref{lambdagen}) expresses the generator as an integral decomposition over such
Dirac measures, we can think a $\Lambda$-process as being a random mixture of
these pure processes.  Of particular interest are the extreme cases
$\Lambda = \delta_0$, and $\Lambda = \delta_1$ --- which correspond to the
Wright-Fisher process and the so-called ``star'' processes, respectively.  As we
will show, these two processes constrain the range of dynamics in $\Lambda$
models. Another  well-studied family in the coalescent literature are the
Beta-processes, for which $\Lambda$ has a Beta distribution.

One can interpret $\Lambda$ as a ``jump'' measure controlling the frequency of
large family sizes.  If $\Lambda$ is concentrated near zero, then jump sizes are
small. 
In this regime, the integrand $\frac{xu(x+(1-x)\lambda) - u(x) + (1-x) u(x-\lambda
x)}{\lambda^2}$ behaves like the standard Kimura drift term $\frac{1}{2}x(1-x)
u''(x)$.
For the pure processes, where $\Lambda$ is concentrated at the point
$\lambda$, allele frequencies remain constant for an exponential amount of time,
until a bottleneck event in which a fraction $\lambda$ of the population is
replaced by a single individual.  Such events cause the allele frequency to
increase instantaneously by the amount $(1-x)\lambda$, or decrease by $\lambda x$.  In the most general case of an arbitrary measure $\Lambda$, these behaviors are mixed, and the jump events occur at exponential times with a random mean, and are associated with jumps of random size $\lambda$.  

If $\Lambda$ places large mass near zero, the process becomes diffusion-like, with sample paths exhibiting frequent, small jumps.  On the other hand, if $\Lambda$ is mostly concentrated away from zero then allele dynamics 
are of the ``jump and hold'' type, with fewer, but more sizable, jumps.  Such behavior is most extreme in the star model, whose sample paths are constant until a single jump to absorption.  

\section{The Stationary Distribution of $\Lambda$-Processes}

In the absence of mutation, $\mathbf{M} = \mathbf{I}$, allele frequencies must
eventually fix at 0 or 1, and thus any discrete generalized Eldon-Wakeley model
possesses a trivial stationary distribution whose concentration at the absorbing states
$\{0,N\}$ depends on the initial condition.  When mutation rates $\mu_i$ are strictly positive, however, each
generalized Eldon-Wakeley process in a population size $N$ possesses a unique, non-trivial 
stationary distribution, $\pi_N$, to which the process converges, regardless 
of the initial condition.

In Figure 2, we plot the stationary distributions for a few $\Lambda$-processes,
in the case of symmetric mutation $\theta_1 = \theta_2$.   Generally, these
distributions have the same qualitative dependence on the mutation rate as the
classical Wright-Fisher stationary distribution: they continuously progress
from Dirac singularities at the boundaries to distributions concentrated more in
the center of the interval, as the mutation rate increases.   It is interesting to
observe, however, that the non-Wright-Fisherian processes tend to have more mass
at intermediate allele frequencies, and less relative mass near the boundaries, than the
Wright-Fisherian model.  We shall study this
phenomenon more precisely, below.

\begin{figure}[h] 
\begin{center}
\mbox{
\subfigure[Beta process ($\alpha =0.7, \beta = 1$).]{\includegraphics*[width = 3.2in]
    {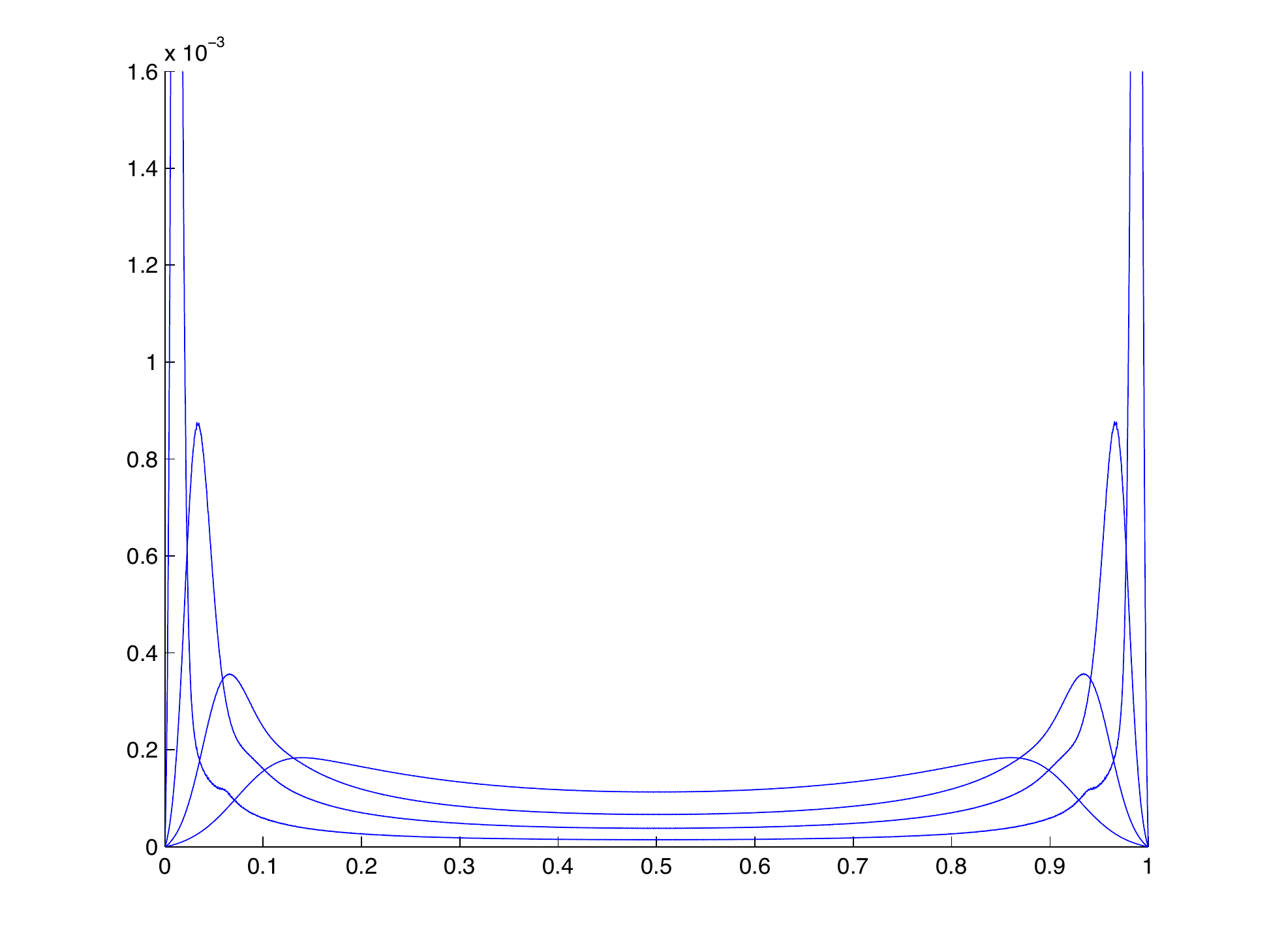}} 
\subfigure[Beta process  ($\alpha = 1.3, \beta = 1$).]{\includegraphics*[width = 3.2in]
    {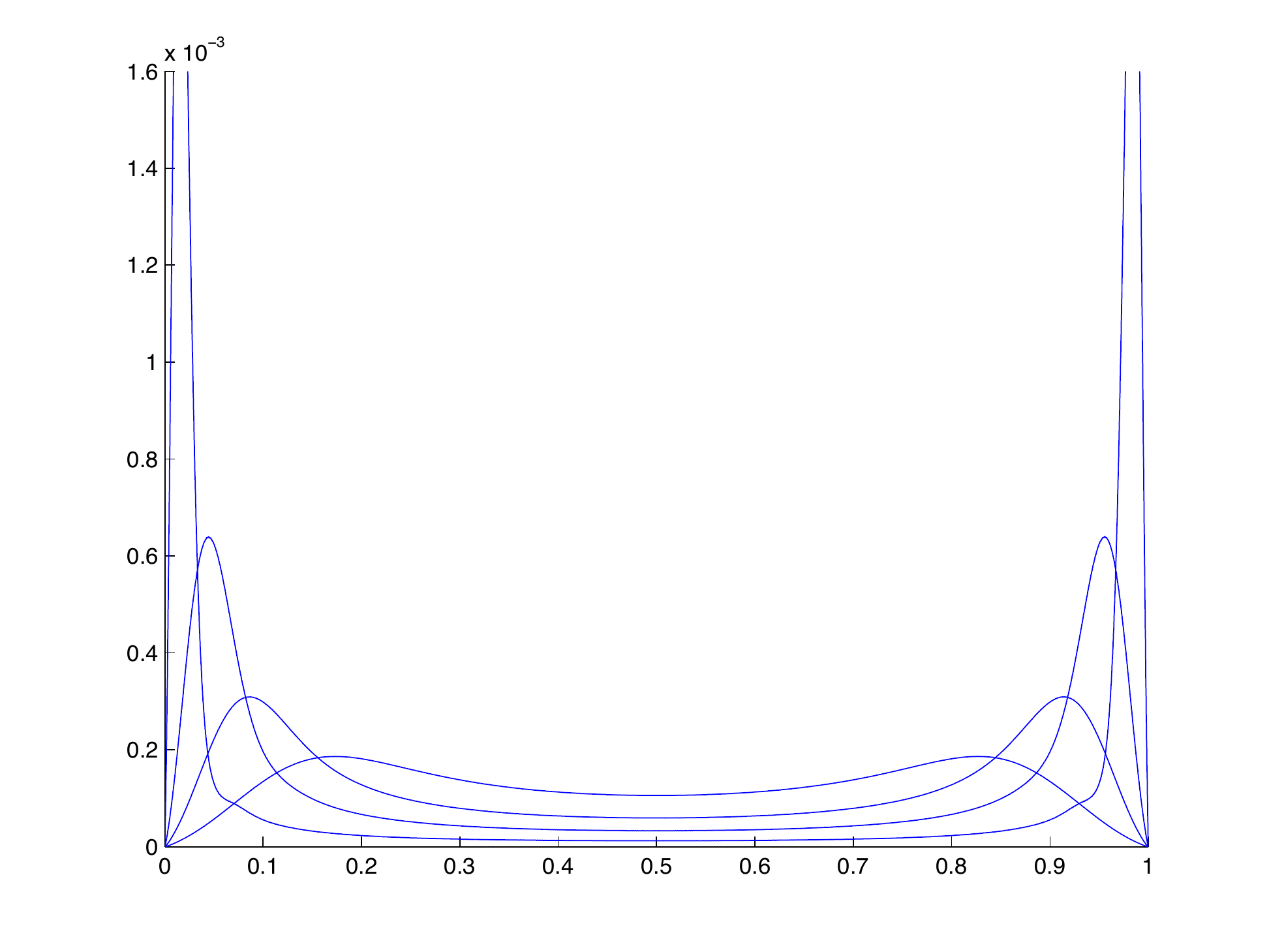}}}
 \subfigure[Wright-Fisher (dashed), Beta (solid) processes, \hspace{1cm} $\alpha = 0.7, \beta = 1$.]{\includegraphics*[width=3.2in]{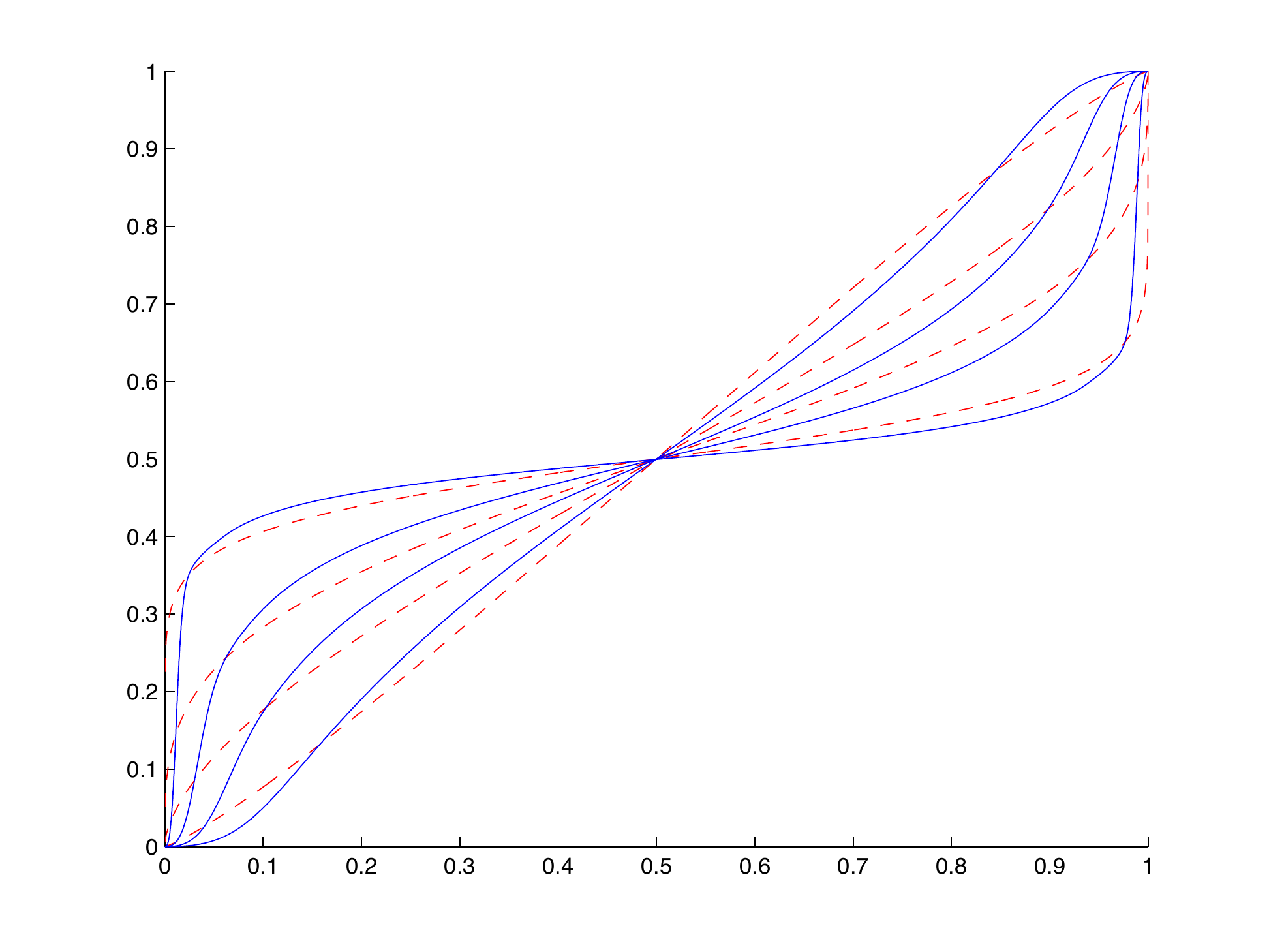}}
 \subfigure[Wright-Fisher (dashed), Beta (solid) processes, \hspace{1cm} $\alpha = 1.3, \beta= 1$.]{\includegraphics*[width=3.2in]{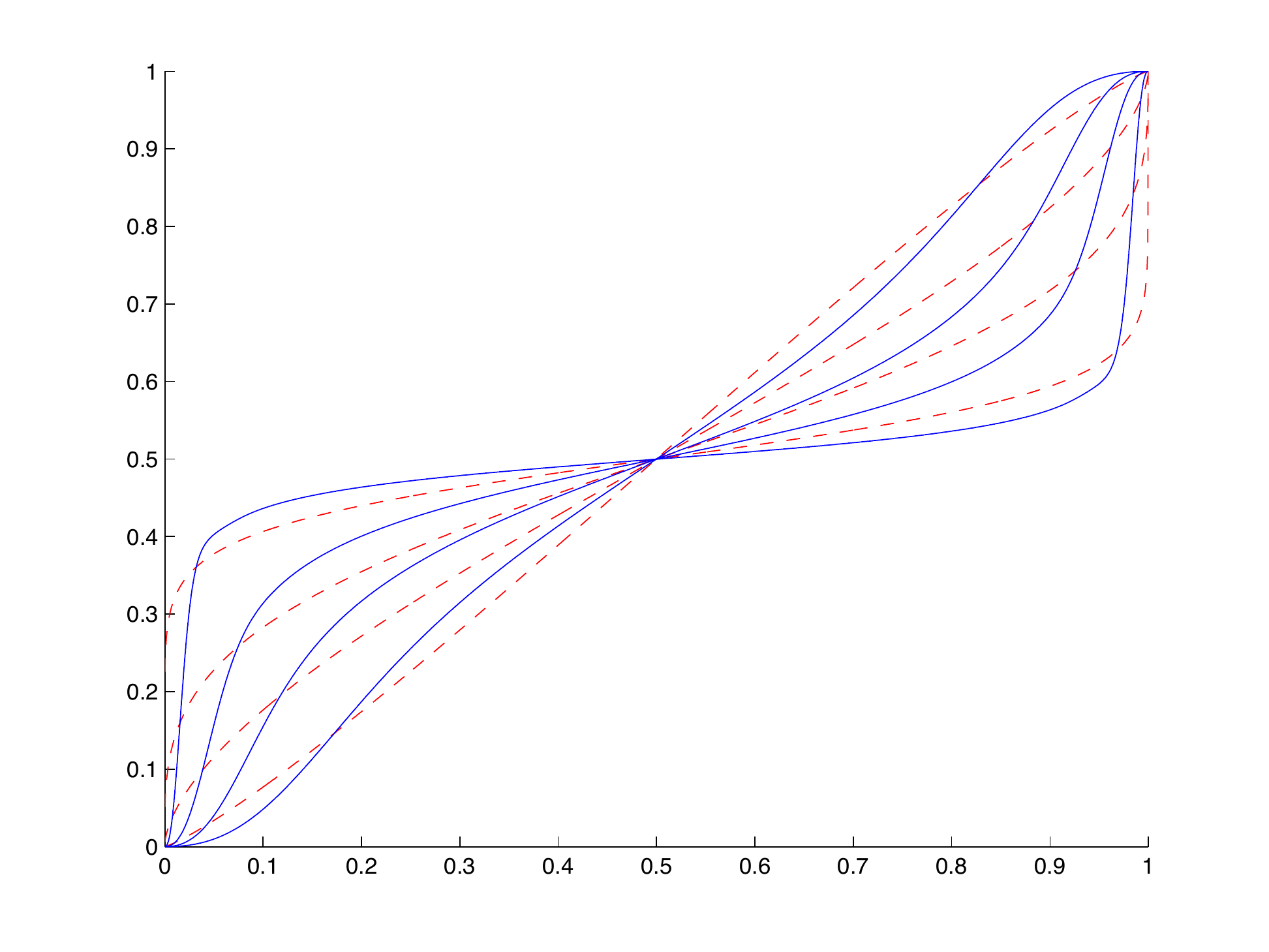}}
\end{center}
\caption{Stationary Distributions for the Wright-Fisher and Beta processes.  Top panels: stationary densities.  Bottom panels, stationary cumulative distribution functions.  Mutation values are $\theta=0.1,0.3,0.6,1.2$.  Population size $N = 8000$.}
\label{samplepaths}
\end{figure}

The continuum limit $\tilde{X}$ of a sequence of Eldon-Wakeley processes also
possesses a unique stationary distribution, $\pi$.  In the Appendix, we
demonstrate that
\begin{equation}
\pi_N \rightarrow \pi, \quad \text{as } N \rightarrow \infty
\end{equation}
In other words, the sequence of discrete equilibrium measures converges to the
continuum equilibrium distribution.  As a result, we can use the continuum equilibrium
as a good approximation in large populations.

\subsection{Moments of the Stationary Distribution.}

In the case of two alleles, the stationary allele frequency distribution 
describes the likelihood of finding the mutant allele at any given frequency, at
some time long in the future.
We will study the moments of the stationary distributions for $\Lambda$ processes
using a version of the Fokker-Planck equation, analogous to the equation used by Kimura to
study the stationary distribution of the Wright-Fisher process.
In general, a stationary distribution
$\pi$ of a Markov process with generator $G$ is the 
solution to its so-called adjoint Fokker-Planck equation, so that
\begin{equation}\label{adeq}
\int_0^1 Gu(x)\, d\pi = 0,
\end{equation}
for every smooth function $u$ on $[0,1]$. We take $G$ as the
generator for the $\Lambda$ process with mutation, given by (\ref{lambdagen}).
Although it is difficult to solve for $\pi$ in general, 
this equation can nonetheless be used to obtain
detailed information about the stationary distribution.

To begin, we develop formulae for the moments of the stationary distribution,
which will allow us to characterize aspects of standing genetic diversity.
Let $m_k$ denote the $k$-th moment of $\pi$, $m_k = \int_0^1 x^k\, d\pi(x)$.  
Setting $u(x) = x$ into (\ref{adeq}) yields an equation for the mean value of the equilibrium, so that 
\begin{equation}\label{moment1}
m_1=\frac{\theta_2}{\theta_1+\theta_2}
\end{equation}
Next, setting $u(x) = x^2$ into (\ref{adeq}) yields a relation between $m_2$ and $m_1$, 
\begin{equation}\label{moment2}
m_2=\frac{(1+\theta_2)\theta_2}{(\theta_1+\theta_2)(1+\theta_1+\theta_2)}
\end{equation}
This recursive process can be continued, because the generator $G$ of
(\ref{lambdagen}) maps polynomials of degree $k$ to polynomials of degree $k$.
Thus, we can derive a system of equations that define the moments of $\pi$.  In
the appendix, we show that this recursion has the form:
\begin{equation} \label{gt}
m_k = \frac{\left(\frac{k}{2} \theta_2 + a_{k-1,k}\right) m_{k-1} + \sum_{j=1}^{k-2} a_{jk} m_j}{\frac{k}{2} (\theta_1 + \theta_2) + a_{kk}}
\end{equation}
where the coefficients $\{a_{jk}\}$, $1 \leq j \leq k$, $k=1,2,\ldots$, are functions of  $\Lambda$, and are given by
\begin{align}\label{aj1}
a_{kk} &= \int_0^1 \frac{1-(1-\lambda)^k - k\lambda(1-\lambda)^{k-1} }{\lambda^2} \, d\Lambda(\lambda)\\
a_{j,k} &= \binom{k}{j-1} \int_0^1 \lambda^{k-j-1} (1-\lambda)^{j-1}\, d\Lambda(\lambda), \qquad j =1,\ldots,k-1\label{aj2}
\end{align}
Initializing this system by (\ref{moment1}), and observing that $a_{kk} > 0$, 
we see that (\ref{gt}) uniquely determines the moments of $\pi$, and indeed this
equation can be used to solve for any specific moment of the stationary
distribution.   While it does not appear that the moments $m_k$ can be solved explicitly 
to produce simple, closed-form expressions as functions of the $\Lambda$ measure, 
it is clear that the coefficients $a_{jk}$ are all linear combinations of moments of $\Lambda$.   
Moreover, each moment $m_k$ is always a ratio of polynomials in $\theta_1$ and $\theta_2$.



\subsection{Identifiability of parameters from equilibrium.}

One of the most important questions about the stationary allele frequency distribution is what
population-genetic parameters can be identified from it --- that is, which
parameters of the population can be uniquely determined from data sampled in
equilibrium?
In the case of $\Lambda$ processes, the parameters we
might wish to infer are the mutation rates, $\theta_1$ and $\theta_2$, as well as 
the (high-dimensional) drift
measure, $\Lambda$, which describes the offspring distribution.


The first two moments of the stationary distribution are given by (\ref{moment1}) and
(\ref{moment2}), and they are independent of the drift measure, $\Lambda$.  Thus,
the first- and second-order moments of the stationary distribution for all $\Lambda$-processes, and any
function of these moments, such as the second-order heterozygosity $\int_0^1
x(1-x) d\pi(x)$, must coincide with those of the classic Wright-Fisher model.   In
the case of symmetric mutation $\theta_1 = \theta_2 = \theta$, the \emph{third}
moment is also, remarkably, constant across the $\Lambda$-processes, and has the
value
\begin{equation}
m_3 = \frac{2+\theta}{4+8\theta}.
\end{equation}

The constancy of the first two moments with respect to $\Lambda$, and the fact
that the mapping from the first two moments to the two mutation parameters $(m_1,m_2) \mapsto (\theta_1,\theta_2)$ 
given by (\ref{moment1}) and (\ref{moment2}) is one-to-one, allows us
to conclude that, regardless of the underlying reproductive process, the mutation
rates $(\theta_1,\theta_2)$ are always identifiable from the equilibrium
distribution.  This is a tremendously productive result --- because it means that we can
always infer mutation rates from sampled data, even when the offspring
distribution of a species is unknown to us.

Conversely, we may ask whether we can identify the form of reproductive process
without knowledge of mutation rates --- that is, is $\Lambda$ uniquely identifiable
from the stationary distribution alone?  It turns out that the answer is negative,
as we demonstrate with the following simple example. 

Consider the  star process $\Lambda = \delta_1$ with mutation parameters $\theta_1= \theta_2=\theta$.  
The generator for this process is
 \begin{equation}\label{lcsta}
Gu(x) = \frac{1}{2}\theta(1-2x)u'(x) + (1-x)u(0)-u(x) + xu(1)
 \end{equation}
The associated stationary distribution $\pi_1$ is easily derived (see Appendix),
and it has a density $d\pi_1/dx$ given by 
\begin{equation}\label{statexcan}
\frac{d\pi_1}{dx} = \frac{1}{\theta}|1-2x|^{\frac{1-\theta}{\theta}}
\end{equation}

For comparison, the Kimura diffusion has a Dirichlet-type stationary distribution
$\pi_0$: \begin{equation}\label{WFequilibrium} \frac{d\pi_0}{dx} =
\frac{\Gamma(2\theta)}{(\Gamma(\theta))^2} x^{\theta - 1} (1-x)^{\theta - 1}
\end{equation}
 
Note that both distributions (\ref{statexcan}) and (\ref{WFequilibrium}) coincide
when $\theta=1$, despite the enormous difference between the drift measures of
these processes. Hence, the map from a given $\Lambda$ process to its stationary
distribution is not one-to-one, and consequently the drift measure $\Lambda$
cannot generally be identified from the stationary distribution. This is a
pessimistic result, because it implies that that the offspring distribution and
form of genetic drift cannot in general be inferred from data collected in
equilibrium, even when the mutation rate is known.

\section{An infinite-sites model for the $\Lambda$-processes}

In order to understand how reproductive skew influences standing genetic
diversity, we now develop an infinite-sites version of the $\Lambda$ process and
study its equilibrium behavior.  This model generalizes the infinite-sites
approach of \cite{Arindam,Desai+Plotkin08}, for the Wright-Fisher model.  We will
study the sampled site frequency spectrum of our model, under two-way mutation.
Our analysis will allow us to quantify our previous observation that the
Wright-Fisher model minimizes the amount of standing genetic diversity, amongst all
$\Lambda$ processes. The site frequency spectrum that we will describe in this
section, for independent sites, differs from the Watterson-type spectrum for fully
linked sites; but our approach nonetheless yields information in that case as
well.

We consider an evolving population of large size $N$, following the reproduction
dynamics of a neutral forward-time $\Lambda$-process, for a fixed $\Lambda$
measure.  We keep track of $L$ sites along the genome, each with two possible
allelic types under symmetric two-way mutation at rates $\theta = \theta_1 =
\theta_2$.  The allele dynamics at each site are described by a two-type
$\Lambda$-process; and the site processes are assumed independent of one another
(that is, we assume free recombination).

Let $\pi_{\theta}$ denote the two-allele stationary distribution for the $\Lambda$ model, given by (\ref{adeq}),  
where the subscript denotes the explicit dependence on the mutation rate.  We imagine sampling $n$ individuals from 
the population at equilibrium, assuming $n \ll N$.  We let  $Y_i$, $1 \leq i \leq L$ represent
the (random) number of sampled individuals at site $i$ with a particular allelic type, so that
their joint distribution has the form
\begin{equation}
P(Y_1 = y_1,\ldots,Y_L = y_L) = \prod_{i=1}^L \int_0^1 \binom{n}{y_i}x^{y_i}(1-x)^{n-y_i}\,d\pi_\theta(x)
\end{equation}

The sampled site frequency spectrum \cite{Sawyer+Hartl92,Bustamanteetal01} is
defined as the vector $(Z_0,\ldots, Z_n)$ \begin{equation}\label{segsite} Z_k =
\sum_{i=1}^L 1_{Y_i = k}, \quad k=0,\ldots,n \end{equation} The variables $Z_k$
record the number of sites with precisely $k$ (out of $n$) sampled individuals of
a given allelic type. In this sense, the sampled site frequency spectrum
represents a discretized version of the stationary distribution $\pi_\theta$.  The
variables $(Z_0,\ldots,Z_n)$ are distributed multinomially on the simplex
$\sum_{k=0}^n Z_k = L$.  The sites $Z_1,\ldots,Z_{n-1}$ are called the
\emph{segregating sites}, representing locations where there is diversity observed
in the the sample.  Conversely, the sum $Z_0+Z_n$ represents the number of
monomorphic sites in the sample.

\subsection{The infinite-site limit and its Poisson representation.}

To study the sampled site frequency spectrum we take the limit of an infinite number of sites, $L\rightarrow \infty$, and we apply a 
Poisson approximation.  We define the genome-wide mutation rate as $\Theta_L = L
\cdot \theta$, and we assume that this mutation rate approaches 
a constant in the limit of many sites: $\Theta_L \rightarrow \Theta < \infty$.  In the Appendix, we show that the segregating site 
variables $(Z_1,\ldots,Z_{n-1})$ then converge, as $L\rightarrow \infty$, to a sequence of independent Poisson random variables with 
means $(c_1\Theta,\ldots,c_{n-1}\Theta)$, given by
\begin{equation}\label{cjs}
c_j(n) = \lim_{\theta \rightarrow 0} \frac{1}{\theta} \int_0^1 \binom{n}{j} x^{j} (1-x)^{n-j}\,d\pi_\theta(x), \qquad j=1,\ldots,n-1
\end{equation}

The numbers $c_j$ may be interpreted as an infinite-sites sample frequency spectrum.  From (\ref{cjs}), it is apparent 
that the means $c_j$ 
depend on the heterozygotic moments of $\pi$, and thus, also on the moments of $\Lambda$.   
This representation is thus a generalization of 
a result of \citeN{Arindam} for the two-allele Wright-Fisher independent-sites model, where  $\Lambda = \delta_0$, and
where the spectrum $c_j$ has the form $c_j=\frac{1}{2}\frac{n}{j(n-j)}$ for $j=1,\ldots,n-1$.  

For the Wright-Fisher model, \citeN{Arindam} has shown that the number of segregating sites in the sample of size $n$, $S_n = \sum_{i=1}^{n-1} Z_i$, is a sufficient 
statistic for $\Theta$, under the independent-sites assumption. 
This is an important result because the number of segregating sites vastly
compresses the information in the frequency spectrum,
yet nonetheless contains no loss of information for the purposes of inferring the mutation rate. 
The Poisson representation of the sample frequency spectrum we have derived shows that $S_n$ remains Poisson distributed even in the general $\Lambda$ infinite-sites case --- under the assumption of site independence. 
Thus, the sufficiency of $S_n$ for $\Theta$ remains 
true, and consequently $S_n$ possesses
desirable qualities for robust estimation of $\theta$.  

\subsection{Diversity amplification and the number of segregating sites.}

The number of segregating sites in a sample is a classic and powerful method to 
quantify genetic diversity in a population. Here we study how $S_n$ depends on the
form of reproduction --- that is, on the form of the drift measures $\Lambda$. In
particular, we will show that the Wright-Fisher model minimizes the expected number
of segregating sites in a sample, compared to all other $\Lambda$ processes.
Thus, large family sizes in the offspring distribution will tend to amplify the
amount of diversity in a population.


Under the infinite-sites Poisson approximation, the number of segregating sites
$S_n$ in a sample of size $n$ is Poisson-distributed, and its expected value is
\begin{equation}
\mathbb{E}S_n = \mathbb{E} \sum_{j=1}^{n-1} Z_j = \Theta \sum_{j=1}^{n-1} c_j(n)
\end{equation}
where $c_j(n)$ are the coefficients in (\ref{cjs}).  The binomial theorem applied to (\ref{cjs}) then shows that:
\begin{equation}\label{ww}
\mathbb{E}S_n = \lim_{\theta \downarrow 0} \frac{1}{\theta} \int_0^1 [1-x^n - (1-x)^n] \, d\pi_\theta(x)
\end{equation}
and so we may interpret the expected number of segregating sites as a type of
higher-order heterozygosity statistic of the
stationary distribution. According to  (\ref{cjs}), $c_j(n)$ is a linear combination of moments of $\pi$, of order at most $n$. 
It follows that the average number of segregating sites may be evaluated by the recursion (\ref{gt}) 
and it can be expressed as rational functions of moments of $\Lambda$. The first
several such expressions are listed below:
\begin{align}
\mathbb{E}S_2 &= \Theta\\
\mathbb{E}S_3 &= \frac{3}{2} \cdot \Theta\\
\mathbb{E}S_4 &= \frac{\int_0^1 (5\lambda^2 - 14 \lambda + 11)\, d\Lambda(\lambda)}{\int_0^1 (6-8 \lambda + 3 \lambda^2) \, d\Lambda(\lambda)} \cdot  \Theta\\
\mathbb{E}S_5 &= \frac{5}{2}\frac{\int_0^1 (5-6\lambda+2\lambda^2)\, d\Lambda(\lambda)}{\int_0^1(6-8\lambda+3\lambda^2)\, d\Lambda(\lambda)} \cdot \Theta\\
\mathbb{E}S_6 &= \frac{1}{2} \frac{\int_0^1 (2608\lambda^2-1558\lambda+411-2428\lambda^3+1312\lambda^4-388\lambda^5+49\lambda^6)\, d\Lambda(\lambda)}{\int_0^1 (6-8\lambda+3\lambda^2)(15-40\lambda+45\lambda^2-24\lambda^3+5\lambda^4)\, d\Lambda(\lambda)} \cdot                                          \Theta\\
\vdots \nonumber
\end{align}

These expressions for the expected number of segregating sites become
extremely complex for larger sample sizes $n$. Nevertheless, we can use asymptotic
methods to study how diversity is expected to behave in large sample sizes.  We
will address two primary questions. First, how does does the expected number of
segregating sites, $\mathbb{E}S_n$, grow as a
function of the sample size, for a given drift-measure $\Lambda$?  And second,
which reproduction processes $\Lambda$ maximize and minimize $\mathbb{E}S_n$, for
fixed $\Theta$?  

In the appendix, we use the moment recursion (\ref{gt}) to derive the following
recursion for the sequence  $\{\mathbb{E}S_n\},n=2,3,\ldots$
\begin{equation}\label{srecur}
\mathbb{E}{S}_n = \frac{n \Theta}{2a_{nn}} + \sum_{j=1}^{n-1} \frac{a_{jn}}{a_{nn}} \mathbb{E}S_j
\end{equation}
with the initial value $\mathbb{E}S_1 = 0$, and where $\{a_{jn}\}$ are given by (\ref{aj1}) and (\ref{aj2}).
We can use this relation to obtain detailed information about $\mathbb{E}S_n$
both as a function of the sample size $n$, and as a function of the underlying $\Lambda$ measure.  

Consider first the pure $\Lambda$ processes $\Lambda = \delta_\lambda$, in
which a single individual may replace a given fixed fraction $0<\lambda\leq 1$ of the population.  
Then we can prove from (\ref{srecur}) that (see Appendix):
\begin{equation}\label{sk}
\mathbb{E}S_n = \frac{\lambda \Theta}{2} n + O(\log n), \qquad n \rightarrow \infty
\end{equation}
Two features are of interest in this asymptotic expression.  First, the average number of segregating sites grows linearly 
with sample size in a pure $\Lambda$ process, as opposed to
logarithmically as in the Wright-Fisher case.  Second, 
the rate of linear growth depends on the jump fraction, $\lambda$, so that asymptotically, diversity is maximized for 
large replacement fractions $\lambda$, and correspondingly minimized when this fraction is small.

Equation (\ref{sk}) can be generalized to a larger class of $\Lambda$ measures.
If $\Lambda$ is any probability measure whose support excludes 
a neighborhood of zero, then we have the asymptotic formula:
\begin{equation}\label{asympform}
\mathbb{E}S_n = C(\Lambda) \Theta \cdot n + O(\log n), \qquad n \rightarrow \infty
\end{equation}
where $C(\Lambda)=(2\int_0^1 \lambda^{-1}\, d\Lambda)^{-1}$.  
This equation shows that linear growth of the expected number of segregating
sites is characteristic of any $\Lambda$ process whose drift
measure is bounded away from zero --- that is, any $\Lambda$ process that does not
contain a component of the Wright-Fisher process. This result allows us to 
determine which reproduction processes  $\Lambda$ maximize and minimize the
average number of segregating sites $\mathbb{E}S_n$, for a given value of
$\Theta$.  In the appendix, we prove the following optimization principle:
\emph{for each sample size $n$,  the diversity maximizing and minimizing processes
within the class of all $\Lambda$ processes must in fact be pure $\Lambda$ 
processes, i.e. where $\Lambda$ is concentrated at a single point.}  It follows
then from (\ref{sk}) that, asymptotically, the Wright-Fisher model ($\lambda = 0$)
minimizes, and the star-model ($\lambda = 1$) maximizes, respectively, the mean
number of segregating sites amongst all $\Lambda$-processes.


Although these results apply in the limit of large sample sizes, we conjecture
that the Wright-Fisher and star models are also the extremal diversity processes
for any sample size, $n$.  From the optimization principle stated above, it
suffices to check this statement within the restricted class of \emph{pure}
$\Lambda$ processes. In Figure 3, we show $\mathbb{E}S_n/\Theta$ as a function
of the jump-size parameter $\lambda$ for the pure models, for a few
values of $n$.  These results confirm that the Wright-Fisher model minimizes
$\mathbb{E}S_n$, whereas the star model maximizes $\mathbb{E}S_n$,  over all
$\Lambda$-models.  We have conducted numerical studies which support this
proposition more generally, even for very small sample sizes.   In this sense, the
Wright-Fisher model and star models are extremal processes, and, for a given
effective variance population size, respectively minimize and maximize the
expected genetic diversity in any sample.  
\begin{figure}[h] \centering \includegraphics[width=4in]{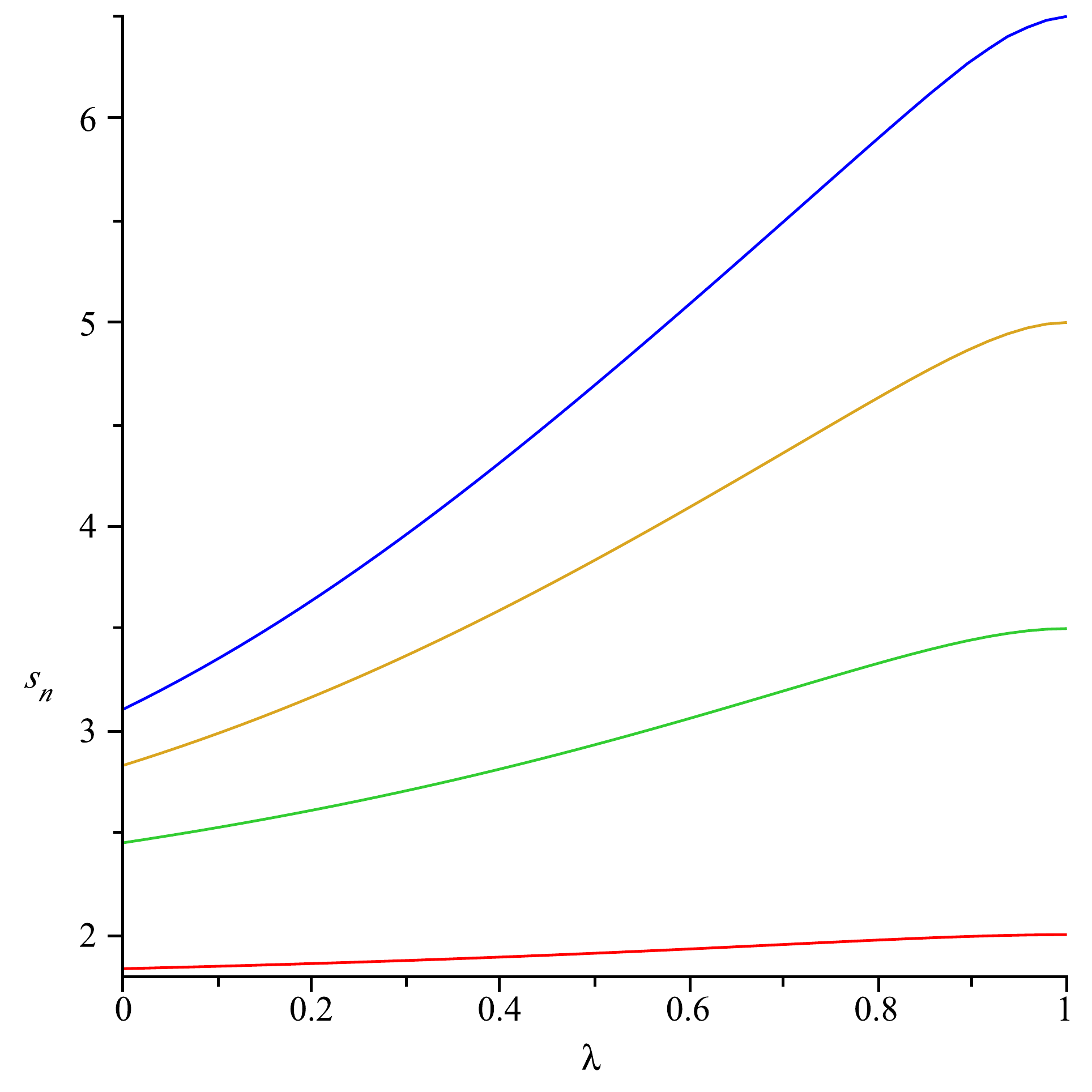}
\caption{Diversity  $s_n=\mathbb{E}S_n/\Theta$ versus jump-size $\lambda$ in the
pure $\Lambda$ processes, for $n=4,7,10,13$ (bottom to top).  These functions
attain their extrema at the endpoints $0$ and $1$, implying that the Wright-Fisher
and star processes minimize and maximize diversity for these sample sizes.}
\end{figure}


\section{Discussion}

We have studied the stationary distribution of a very general class of population
models, under mutation.
We have focused on understanding the interaction between the form of the offspring
distribution and resulting form of genetic drift it engenders, as well as the shape of
the stationary distribution. We have demonstrated that the mutation rate can
always be uniquely identified from the stationary distribution, even when the 
drift measure is unknown (as it always will be, in practice). However, the
form of the drift measure $\Lambda$ cannot always be uniquely identified from equilibrium
properties of the process --- and so may require dynamic data to determine its specific form.

The stationary allele frequency distribution of the Wright-Fisher process is
extremal, in a sense, within the class of $\Lambda$-processes. Specifically, the
Wright-Fisher model exhibits greater probability mass near very high and low
allele frequencies. This observation was formalized by analyzing a $\Lambda$
infinite-sites model, in which we found that the mean number of segregating sites
in a sample is indeed minimized by the Wright-Fisher process.

Our results can be placed in the context of a nascent literature that
views the Wright-Fisher process as an extremal model within the large space of 
possible population processes.  Aside from the diversity-minimization property we
have demonstrated here, it has previously been observed, for instance, that among
$\Lambda$-processes, the Wright-Fisher model minimizes the fixation probability of
an adaptive allele \cite{Der2012}, 
minimizes the time to absorption for new mutants
\cite{Der2011}, and, among generalized coalescents, possesses the fastest rate of
``coming down from infinity'' \cite{Berestycki2010}.  The basic intuition behind all these results 
revolves around the type of sample paths possessed by different processes.
 A typical sample path in the Kimura diffusion
undergoes  a  high frequency of small jumps (in fact, is continuous), and thus new
mutants persist for only
 $O(\log N)$ generations before being eliminated by genetic drift.  By contrast,
in a general $\Lambda$ model with the same variance effective population size,
large jumps in the sample path may occur, but with lower frequency, thereby
lengthening the absorption time --- for example, up to order $N$ generations  in
the pure $\Lambda$ processes.  Since the mean number of segregating sites in
the entire population is the product of the genomic mutation rate and the
expected absorption time for a new mutant, standing genetic diversity must increase
when reproductive skew is present.

Although we have presented results only within the class of $\Lambda$-processes,
many of our formulae --- for example  (\ref{gt}) -- can be generalized to the set
of all Cannings models.  We expect the diversity-minimization property of the
Wright-Fisher model will hold even within this larger family.  


The infinite-sites model of the $\Lambda$-process we have developed here differs from
the Watterson infinite-sites model typically encountered in coalescent theory,
in two respects. First, we have assumed free recombination and hence independent sites,
whereas in Watterson's model sites are tightly linked.  Second, we assume two-way
mutation between alternative alleles at each site, whereas Watterson's model
features one-way mutation at each site away from the existing type.
Nonetheless, some of the results derived for our site-independent, infinite-sites model extend to
the Watterson, linked infinite-sites $\Lambda$ processes as well.  

In general, the (random) number
of segregating sites $S_n$ in a sample is a function of the dependency structure among
sites.  For example, in the simple Wright-Fisherian case, independence of sites
gives rise to a Poisson distribution for $S_n$, compared to a sum of geometric random
variables in the case of no recombination \cite{Ewens04}.  However, as \citeN{Watterson75}
has already remarked, the \emph{mean} value of $S_n$ is generally robust to the
recombination structure of an infinite-sites model.  If $Y_1, \ldots, Y_L$ denote
the allelic distributions at $L$ sites, then (\ref{segsite}) shows that the
expected number of segregating sites is a function only of the marginal
distributions of $Y_i$, instead of their joint distribution.  Thus the expected
number of segregating sites in a sample is unaffected by linkage. Likewise, the
distinction between one-way and two-way mutation (and folded and unfolded spectra)
does not alter the mean number of segregating sites other than by a possible overall scaling.


Because $S_n$ is such a common measure of genetic diversity, our results have some
connections to the literature on
$\Lambda$-coalescents.  Recently, \citeN{Berestycki12arxiv}  showed that, for
those $\Lambda$ measures whose coalescent comes down from infinity, the
(random) number of segregating sites $S_n$ in a sample of size $n$ for the
Watterson model has the asymptotic law:
\begin{equation}\label{berresult}
\frac{S_n}{\int_0^n q\psi^{-1} (q) dq} \rightarrow \Theta
\end{equation}
where $\psi$ is the Laplace exponent of the $\Lambda$ measure, defined as:
\begin{equation}
\psi(q) = \int_0^1 \frac{\exp(-q\lambda) - 1 + q\lambda }{ \lambda^2} d\Lambda
\end{equation}
The authors conjectured that (\ref{berresult}) holds more generally, even
when $\Lambda$ does not come down from infinity.  In this respect, our asymptotic
result (\ref{asympform}) for $\mathbb{E}S_n$ --- derived for $\Lambda$ measures
bounded away from zero (and thus always fail to come down from infinity) is
evidence in favor of their more general conjecture, in the case not covered by the hypotheses of their
theorem.  For under such assumptions, the Laplace exponent has the expression 
\begin{equation}
\psi(q) \sim q \int_0^1 \lambda^{-1}\, d\Lambda
\end{equation}
which implies from (\ref{berresult}) that
\begin{equation}
\int_0^n q \psi^{-1} (q) \, dq \sim n \cdot  \left(\int_0^1 \lambda^{-1} d\Lambda \right)^{-1}
\end{equation}
which is proportional with our formula (\ref{asympform}) for $\mathbb{E}S_n$.  Finally, returning to the case 
of independent sites, developed in this paper, it is also true that the distributional 
convergence of (\ref{berresult})
holds, a fact which follows from the Poisson representation for $S_n$.  


In our analysis of the expected number of segregating sites, we have concentrated
on the two extreme cases--- the Wright-Fisher case, for which $\mathbb{E}S_n$ is
known to grow logarithmically in the sample size $n$, and the case of
pure $\Lambda$ processes (and more generally those $\Lambda$ processes whose
drift measure support excludes zero), for which we have demonstrated linear growth of
$\mathbb{E}S_n$.   Nevertheless, the recurrence relation
(\ref{srecur}) can be used to analyze intermediary cases as well, for example the Beta
processes, in which the density of $\Lambda$ behaves like a power-law in the
vicinity of zero.  For such reproduction measures, growth in
diversity with sample size will lie somewhere between the logarithmic and the
linear cases.

\section{Appendix }

%

\subsection{The Stationary Distribution of Processes with Reproductive Skew.}
Let $X^{(N)}$ be a sequence of discrete generalized Eldon-Wakeley processes, one for each population size $N$, converging to a continuum $\Lambda$ process $\tilde{X}$, under the state and time re-normalization of (\ref{timescaling}), (\ref{murates}).  If we suppose that each discrete process operates under strictly positive mutation rates $\mu_i^{(N)}$, then it is easily verified that the associated forward-time transition matrices $\mathbf{P}^{(N)}$ possess strictly positive entries, and thus, from the Perron-Frobenius theorem, there exists a unique stationary distribution $\pi_N$ for each process $X^{(N)}$.  A standard argument, using the fact that the sequence $\pi_N$ is tight, shows that there is a subsequence $\pi_{N_k}$ converging to a probability measure $\pi$ which is a stationary distribution for $\tilde{X}$ (see \citeN{Ethier+Kurtz86}, for example).   This argument indeed demonstrates that any weak limit point of $\pi_N$ is a stationary distribution $\tilde{X}$; below, through the moment recursion, this distribution is uniquely characterized, and hence every weakly convergent subsequence of $\pi_N$ converges to $\pi$, thus $\pi_N \rightarrow \pi$.  

\subsection{Derivation of a Recursion for the Moments of the Stationary Distribution.\\}

Let $\pi$ be the stationary distribution for the two-type $\Lambda$ process, which satisfies (\ref{adeq}).  In this section we obtain a recursion formula for the moments of $\pi$.  

Define the operator $Lu(x) = xu(x+(1-x)\lambda) - u(x) + (1-x) u(x-\lambda x)$.  
Setting $u(x) = x^k$, $k \geq 0$ we have
\begin{align}\nonumber
Lx^k &= ((1-\lambda)^k - 1)x^k-(1-\lambda)^k x^{k+1} + x(x+(1-x)\lambda)^k\\ \nonumber
&= ((1-\lambda)^k - 1)x^k - (1-\lambda)^k x^{k+1} + x^{k+1}(1-\lambda)^k + kx^k (1-\lambda)^{k-1}\lambda + \sum_{j=0}^{k-2} \binom{k}{j} x^{j+1} (1-\lambda)^{j} \lambda^{k-j} \\
&=((1-\lambda)^k -1+k\lambda(1-\lambda)^{k-1}) x^k + \sum_{j=1}^{k-1} \binom{k}{j-1} x^j (1-\lambda)^{j-1} \lambda ^{k-j+1}\label{expand1}
\end{align}
Separating the $\Lambda$ generator (\ref{lambdagen}) into the mutation and pure-drift portions, we define the latter to be the operator
\begin{equation}\label{pured}
G_Du(x) =\int_0^1 \frac{1}{\lambda^2} Lu(x) d\Lambda(\lambda).
\end{equation}
If we write 
\begin{equation}\label{expand2}
G_D x^k = \sum_{j=1}^k b_{jk} x^j
\end{equation}
Then substituting (\ref{expand1}) into (\ref{pured}), and then comparing the coefficients to (\ref{expand2}) , we have
\begin{align}
b_{kk} &= \int_0^1 \frac{(1-\lambda)^k + k\lambda(1-\lambda)^{k-1} - 1}{\lambda^2} \, d\Lambda(\lambda)\\
b_{jk} &= \binom{k}{j-1} \int_0^1 \lambda^{k-j-1} (1-\lambda)^{j-1}\, d\Lambda(\lambda), \qquad j < k
\end{align}
Let $\pi$ be any stationary distribution of the process.  Then according to (\ref{adeq}), 
\begin{equation} \int_0^1 \left[\frac{1}{2} (-\theta_1 x + \theta_2(1-x)) k x^{k-1} + G_Dx^k \right] d\pi(x) = 0
\end{equation}
Using the expansion for $G_D x^k$ above, we derive:
\begin{equation} \label{momrecur}
\left(b_{kk} - \frac{k}{2} (\theta_1 + \theta_2)\right) m_k + \left(\frac{k}{2} \theta_2 + b_{k-1,k}\right)m_{k-1} + \sum_{j=1}^{k-2} b_{jk} m_j = 0
\end{equation}
where $m_j$ is the $j$-th moment of the $\pi$.  This is equivalent to formulae (\ref{gt}),  (\ref{aj1}) and (\ref{aj2}), where $b_{kk} = -a_{kk}$, and $b_{jk} = a_{jk}$ for $j < k$.

\subsection{Derivation of the star-process stationary distribution.}

Consider the probability measure $\mu$ on $[0,1]$, with density\begin{equation}
\frac{d\mu}{dx} = \frac{1}{\theta}|1-2x|^{\frac{1-\theta}{\theta}}
\end{equation}

The generator for the star-process undergoing symmetric mutation is $Gu = \frac{1}{2}\theta(1-2x)u'(x) + (1-x)u(0)-u(x) + xu(1)$, and the space of twice continuously differentiable functions $C^2[0,1]$ is a core for $G$.  Noting that the density $d\mu/dx$ satisfies the equation $-d/dx(\frac{\theta}{2} (1-2x) d\mu/dx) - d\mu/dx = 0$ and integrating by parts, it is readily verified that $\int_0^1 Gu \,d\mu = 0$ for every $u \in C^2$.  Thus $\mu$ is a stationary distribution for the process, and is further unique as established by the moment recursion (\ref{momrecur}).  

\subsection{Poisson representation of the infinite-sites model.}

In this section we show that the segregating site variables $(Z_1,\ldots,Z_{n-1})$ in the independent sites $\Lambda$ model converge to a sequence of Poisson random variables with means $(c_1\Theta, \ldots, c_{n-1} \Theta)$, given by (\ref{cjs}).   We make use of the structure of the moments of the stationary distribution as found in (\ref{gt}).  First we require a preliminary lemma.
\begin{lemma}\label{baselemma} Let $\pi_\theta$ be the stationary distribution of a $\Lambda$ process undergoing symmetric mutation $\theta$.  Then there exist constants $c_j \geq 0$, for $1 \leq j \leq n-1$,
\begin{equation}\label{gg3}
p_j(\theta) =\int_0^1 \binom{n}{j}x^j(1-x)^{n-j}\,d\pi_{\theta}(x) =  c_j \theta + o(\theta), \quad \theta \downarrow 0
\end{equation}
\end{lemma}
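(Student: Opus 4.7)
The plan is to show each moment $m_k(\theta)$ of $\pi_\theta$ is real-analytic in $\theta$ near $0$ with $m_k(0)=\tfrac12$, after which a Taylor expansion of $p_j(\theta)$ combined with a binomial cancellation produces the claim. First I would establish analyticity via the moment recursion (\ref{gt}) specialized to $\theta_1=\theta_2=\theta$, which takes the form
\begin{equation*}
m_k(\theta) = \frac{(\tfrac{k}{2}\theta + a_{k-1,k})\,m_{k-1}(\theta) + \sum_{j=1}^{k-2}a_{jk}\,m_j(\theta)}{k\theta + a_{kk}}.
\end{equation*}
An elementary calculation shows the integrand defining $a_{kk}$ in (\ref{aj1}) is strictly positive on $(0,1]$ for $k\ge 2$ (its $\lambda$-derivative is $k(k-1)\lambda(1-\lambda)^{k-2}\ge 0$), so $a_{kk}>0$ and the denominator is bounded away from zero for $\theta\ge 0$. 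Induction starting from $m_1(\theta)\equiv \tfrac12$ (from (\ref{moment1})) then shows each $m_k$ is a rational function of $\theta$ analytic on a neighborhood of $0$.

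Next I would prove $m_k(0)=\tfrac12$ by induction on $k$. Setting $\theta=0$ in the recursion and using the inductive hypothesis yields $m_k(0) = \tfrac12 \bigl(\sum_{j=1}^{k-1}a_{jk}\bigr)/a_{kk}$, so it suffices to show $\sum_{j=1}^{k-1}a_{jk} = a_{kk}$. This identity follows from $1 = (\lambda+(1-\lambda))^k = \sum_{i=0}^k \binom{k}{i}\lambda^{k-i}(1-\lambda)^i$: peeling off the $i=k-1$ and $i=k$ terms gives
\begin{equation*}
\sum_{i=0}^{k-2}\binom{k}{i}\lambda^{k-i}(1-\lambda)^i \;=\; 1-(1-\lambda)^k - k\lambda(1-\lambda)^{k-1},
\end{equation*}
and dividing by $\lambda^2$ and integrating against $\Lambda$ matches the definitions (\ref{aj1})--(\ref{aj2}) after reindexing $i=j-1$.

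With $m_k(\theta)=\tfrac12 + \alpha_k\theta + O(\theta^2)$ in hand, expanding $(1-x)^{n-j}$ in (\ref{gg3}) gives $p_j(\theta) = \binom{n}{j}\sum_{\ell=0}^{n-j}(-1)^\ell \binom{n-j}{\ell}\,m_{j+\ell}(\theta)$. The $O(1)$ term is $\tfrac12\binom{n}{j}(1-1)^{n-j}=0$ since $1\le j\le n-1$, so $p_j(\theta)=c_j\theta+O(\theta^2)$ with $c_j = \binom{n}{j}\sum_{\ell=0}^{n-j}(-1)^\ell\binom{n-j}{\ell}\,\alpha_{j+\ell}$. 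Non-negativity is automatic: $p_j(\theta)\ge 0$ for $\theta>0$ together with $p_j(0)=0$ force $c_j=\lim_{\theta\downarrow 0}p_j(\theta)/\theta \ge 0$.

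The main technical pressure point is the identification $m_k(0)=\tfrac12$. Conceptually this says only that the $\theta\downarrow 0$ limit of $\pi_\theta$ is $\tfrac12(\delta_0+\delta_1)$, the unique symmetric probability measure on the absorbing set $\{0,1\}$; one could present this as a soft argument combining tightness of $\{\pi_\theta\}$, the $\theta_1\leftrightarrow\theta_2$ symmetry, and absorption at the endpoints. But the algebraic induction via the binomial identity above is fully self-contained and pairs naturally with the analyticity argument, so I would use that as the main line.
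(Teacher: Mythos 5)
Your proof is correct, and it shares the paper's overall skeleton while handling the key vanishing step by a genuinely different argument. Like you, the paper uses the moment recursion (\ref{gt}) under $\theta_1=\theta_2=\theta$ (with $a_{kk}>0$, so the denominator $k\theta+a_{kk}$ is bounded away from zero for $\theta\ge 0$) to conclude that every $m_k(\theta)$, and hence every $p_j(\theta)$, is differentiable at $\theta=0$; and your final step, extracting $c_j=\lim_{\theta\downarrow 0}p_j(\theta)/\theta\ge 0$ from $p_j\ge 0$ and $p_j(0)=0$, is exactly the paper's. Where you diverge is in proving $p_j(0)=0$. The paper does this in one line by domination: for $1\le j\le n-1$ one has $x^j(1-x)^{n-j}\le x(1-x)$ on $[0,1]$, so $|p_j(\theta)|\le \binom{n}{j}\int_0^1 x(1-x)\,d\pi_\theta(x)$, and the right-hand side is an explicit $O(\theta)$ quantity, independent of $\Lambda$, computable from the closed forms (\ref{moment1})--(\ref{moment2}). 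You instead prove $m_k(0)=\tfrac12$ for all $k$ by induction via the binomial identity $\sum_{j=1}^{k-1}a_{jk}=a_{kk}$, and then kill the constant term of $p_j$ through the cancellation $(1-1)^{n-j}=0$. Your route is longer, but it buys more: it delivers the full first-order expansion $m_k(\theta)=\tfrac12+\alpha_k\theta+O(\theta^2)$ with an explicit formula for $c_j$ as a signed combination of the $\alpha_{j+\ell}$, and both of your auxiliary facts are precisely what the paper needs (and states with less justification) later in its appendix: the derivation of (\ref{srecur}) asserts $m_n(\theta)=1/2+v_n\theta+O(\theta^2)$ by appeal to ``symmetry and differentiability,'' and separately invokes $a_{nn}=\sum_{j=1}^{n-1}a_{jn}$, so your induction makes those steps self-contained. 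One pedantic point: you prove strict positivity of the integrand of $a_{kk}$ only on $(0,1]$, so for the Wright-Fisher case $\Lambda=\delta_0$ you need the standard convention that the integrand takes its limiting value $\binom{k}{2}$ at $\lambda=0$ (the paper implicitly assumes this when asserting $a_{kk}>0$); with that convention your argument covers all drift measures $\Lambda$.
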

\begin{proof} The lemma is equivalent to saying $p_j(0)=0$ and $p_j$ has a derivative at $\theta = 0$.
First observe from (\ref{gt}) that under $\theta_1 = \theta_2$, all the moments $m_k(\theta)$ of the stationary distribution are differentiable in $\theta$ for all $\theta \geq 0$, and hence $p_j(\theta)$ is differentiable everywhere. Also
\begin{equation}
|p_j(\theta)| \leq \binom{n}{j} \int_0^1 x(1-x)\,d\pi_{\theta}(x) = \binom{n}{j}\frac{2\theta}{1+2\theta}
\end{equation}
hence $p_j(\theta) \rightarrow 0$ as $\theta \downarrow 0$.   
\end{proof}

Now to establish the Poisson representation, it is enough to apply the well-known Poisson approximation to the multinomial distribution.  We use:

\begin{theorem} \cite{mcdonald1980}.  If $(Z_0,\ldots,Z_n)$ is multinomial with parameters $(L,p_0,\ldots,p_n)$, and $(V_1,\ldots,V_{n-1})$ are independent Poissons with means $Lp_j$, then
\begin{equation}\label{poiss3}
\|(Z_1,\ldots,Z_{n-1}) - (V_1,\ldots,V_{n-1})\| \leq 2L \left(\sum_{j=1}^{n-1} p_j\right)^2 
\end{equation}
where $\|\cdot \|$ is the total variation norm of measures.
\end{theorem}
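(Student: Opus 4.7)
The plan is to establish this Poisson-approximation bound by a standard two-step coupling: first replace the within-trial multinomial structure with independent Bernoullis, and then apply Le Cam's inequality to pass from Binomials to independent Poissons. Write $q = \sum_{j=1}^{n-1} p_j$, realize the multinomial count as the empirical distribution of $L$ i.i.d.\ categorical trials $X_1, \ldots, X_L$ taking values in $\{0, 1, \ldots, n\}$, and set $\xi_{i,j} = 1_{X_i = j}$ so that $Z_j = \sum_{i=1}^L \xi_{i,j}$ for $j = 1, \ldots, n-1$.

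The first step is to compare, trial by trial, the indicator vector $\xi_i = (\xi_{i,1}, \ldots, \xi_{i,n-1})$ against a vector $\eta_i$ of independent Bernoullis with $\eta_{i,j} \sim \mathrm{Ber}(p_j)$. The two laws agree on every marginal, but $\xi_i$ puts all mass on $\mathbf{0}$ and the unit vectors $e_j$, while $\eta_i$ leaks mass of order $q^2$ onto multi-indices. An elementary atom-by-atom accounting --- bounding the discrepancy at $\mathbf{0}$ by $\prod_k(1-p_k) - (1-q) \le \sum_{j<k} p_j p_k$ via inclusion-exclusion, the aggregate discrepancy at the singletons $e_j$ by $\sum_j p_j(1 - \prod_{k \ne j}(1-p_k)) \le q^2 - \sum_j p_j^2$, and the total $\eta$-mass on multi-indices by the union bound $\sum_{j<k} p_j p_k$ --- yields $\|\xi_i - \eta_i\|_{TV} \le q^2$. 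By tensorization of total variation under product measures and contractivity under the coordinate-wise summation map, this gives
\begin{equation*}
\|(Z_1, \ldots, Z_{n-1}) - (B_1, \ldots, B_{n-1})\|_{TV} \le L q^2,
\end{equation*}
where $B_j = \sum_i \eta_{i,j} \sim \mathrm{Bin}(L, p_j)$ are independent across $j$ by construction.

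The second step invokes Le Cam's classical inequality $\|\mathrm{Bin}(L, p_j) - \mathrm{Poi}(L p_j)\|_{TV} \le L p_j^2$ coordinate by coordinate; since both $(B_j)$ and the target $(V_j)$ have independent coordinates, a second tensorization produces $\|(B_j) - (V_j)\|_{TV} \le L \sum_j p_j^2 \le L q^2$. A triangle inequality then delivers exactly the stated bound $2 L q^2$, with no room to spare.

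The main obstacle is the atom-level bookkeeping in the first step: one must keep track of the $(1-p_k)$ correction factors on every atom and invoke the inclusion-exclusion inequalities carefully enough to get the clean per-trial bound $q^2$ --- which is precisely what makes the constant $2$ in the final inequality attainable rather than something cruder like $3$ or $4$. A more conceptual alternative would be to bypass the intermediate Binomial step and apply the multivariate Chen-Stein method of Arratia-Goldstein-Gordon directly to the dependent family $\{\xi_{i,j}\}$, whose only nontrivial dependence is the within-trial exclusion $\sum_j \xi_{i,j} \le 1$; the resulting local-dependence bound also yields the $O(Lq^2)$ scaling with explicit constants.
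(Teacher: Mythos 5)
Your proposal is correct, but note first that there is nothing in the paper to compare it against: this theorem is imported verbatim from \cite{mcdonald1980} and used as a black box, so the paper contains no proof of it at all. Your two-step coupling --- replacing, trial by trial, the multinomial indicator vector by independent Bernoullis, tensorizing total variation over the $L$ i.i.d.\ trials, contracting under the coordinatewise summation map, and then applying Le Cam's binomial-to-Poisson inequality coordinate by coordinate --- is a standard and fully valid route to the stated bound, and your atom-by-atom accounting in the first step checks out: the discrepancy at $\mathbf{0}$ is at most $\sum_{j<k}p_jp_k$ by Bonferroni, the aggregate singleton discrepancy is at most $q^2-\sum_j p_j^2$, and the union bound controls the multi-index mass by $\sum_{j<k}p_jp_k$.

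The one point you should tighten is the total-variation convention, since the constant $2$ depends on it. The paper's $\|\cdot\|$ is the total variation \emph{norm} of measures, i.e.\ the mass $\sum_x|\mu(x)-\nu(x)|$, with maximum value $2$; your chain is written in the half-normalized sup-over-sets convention, in which Le Cam reads $\|\mathrm{Bin}(L,p_j)-\mathrm{Poi}(Lp_j)\|_{TV}\le Lp_j^2$. Translating your literal bounds ($q^2$ per trial plus $L\sum_j p_j^2$ from Le Cam) into the norm convention would give $4Lq^2$, not $2Lq^2$. The fix is already contained in your own bookkeeping: your per-trial bound is really $q^2-\sum_j p_j^2$ (the singleton term), so the first stage costs $L\bigl(q^2-\sum_j p_j^2\bigr)$ and the Le Cam stage costs $L\sum_j p_j^2$, which add to exactly $Lq^2$ in the sup convention, i.e.\ $2Lq^2$ in the paper's norm. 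So, contrary to your closing remark that there is ``no room to spare,'' the slack in the singleton bound is precisely what absorbs the Le Cam term and makes the constant $2$ come out; had you used the crude per-trial bound $q^2$, the constant would degrade. Your suggested Chen--Stein alternative on the locally dependent family $\{\xi_{i,j}\}$ would indeed give the same $O(Lq^2)$ scaling, but with worse explicit constants, so the coupling route you chose is the better match for the quoted statement.
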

The Poisson representation is now obvious, since $p_j(\theta) = O(\theta) = O(1/L)$ by Lemma \ref{baselemma}  and therefore the right-hand side of (\ref{poiss3}) goes to zero as $L \rightarrow \infty$.  Since $(V_1,\ldots,V_{n-1})$ are converging to a sequence of independent Poisson distributions with finite means $c_j \Theta$, where $c_j$ are as in Lemma \ref{baselemma}, so must  $(Z_1,\ldots,Z_{n-1})$.

\subsection{The average number of segregating sites.}

In this section, we study the number of segregating sites $S_n$ in the infinite-sites $\Lambda$ model, deriving the recursion (\ref{srecur}) for the average diversity measure $s_n = \mathbb{E}S_n/\theta$, and use it to  obtain asymptotic expressions for diversity.  

Let $\pi_\theta$ be the stationary distribution of the $\Lambda$ process under two way symmetric mutation $\theta$.  Define $H_n(\theta)$ as the heterozygosity measure
\begin{equation}
H_n(\theta) = \sum_{j=1}^{n-1} \int_0^1 \binom{n}{j} x^j (1-x)^{n-j}\, d\pi_\theta(x)
\end{equation}
Applying the binomial theorem, 
\begin{equation}\label{hn}
H_n(\theta) = \int_0^1 (1-x^n -(1-x)^n)\, d\pi_\theta(x) = \int_0^1 (1-2x^n) \, d\pi_\theta(x)
\end{equation}
where the second equality follows from symmetry of the stationary distribution.  Now, define the diversity measure $s_n \equiv \mathbb{E}S_n/\Theta$.  We have from (\ref{ww}):
\begin{equation}\label{ssn}
s_n = \lim_{\theta \downarrow 0} \frac{1}{\theta} H_n(\theta)
\end{equation}
Under symmetric mutation $\theta= \theta_1 = \theta_2$, the recursion formulae for moments (\ref{gt}) reads
\begin{equation}\label{pl}
m_n(\theta) = \frac{n\theta + \sum_{j=1}^{n-1} a_{jn} m_j(\theta)}{n\theta + a_{nn}}
\end{equation}
From symmetry of the stationary distribution and differentiability of $m_n$, there are numbers $\{v_n\}$ such that $m_n(\theta) = 1/2 + v_n \theta + O(\theta^2)$.  Inserting this into the right-hand side of (\ref{pl}), and expanding in a Taylor series, we obtain, by comparing the first-order coefficients,
a recursion for $v_n$:
\begin{equation} v_n = \frac{-n/4  + \sum_{j=1}^{n-1} a_{jn} v_j}{a_{nn}}
\end{equation}
The equations (\ref{hn}) and (\ref{ssn}) imply that $s_n = -2v_n$.   Thus the corresponding recursion for $\{s_n\}$ is
\begin{equation}\label{snr}
s_n = \frac{n/2 + \sum_{j=1}^{n-1} a_{jn} s_j}{a_{nn}}
\end{equation}
where we initialize $s_1 = 0$.
Observe that by the binomial theorem, one has the relation
\begin{equation}
a_{nn} = \sum_{j=1}^{n-1} a_{jn}
\end{equation}
Thus the numbers $a_{jn}/a_{nn}$ define a probability measure on the set $j \in \{1,\ldots,n-1\}$.  By studying this measure and the recurrence relation defining $s_n$, we may derive the asymptotics for $s_n$.  

Now suppose that the underlying $\Lambda$ process is associated with a $\Lambda$ measure with support bounded away from zero.  Then from (\ref{aj1}), $a_{nn} = \int_0^1 \lambda^{-2} d\Lambda + O(\gamma^n)$, for some $0 < \gamma < 1$.  Therefore, $ a_{nn}^{-1} = \left(\lambda^{-2} d\Lambda \right)^{-1} + O(\gamma^n)$.  Using this estimate in the recursion (\ref{ssn}), and defining $A = \left(\int_0^1 \lambda^{-2} d\Lambda\right)^{-1}$, 
\begin{equation}
s_n = An/2 + A \sum_{j=1}^{n-1} a_{jn} s_j + O(\gamma^n)
\end{equation}
Now, we shall prove
\begin{theorem}
\begin{equation}
s_n = Cn + O(\log n), \qquad n \rightarrow \infty
\end{equation}
where 
\begin{equation}
C = \frac{A}{2-2A\int_0^1 \lambda^{-2} (1-\lambda)\,d\Lambda}
\end{equation}
\end{theorem}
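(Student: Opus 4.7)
The strategy is to substitute the ansatz $s_n = Cn + r_n$ into the recursion (\ref{snr}) and show $r_n = O(\log n)$. From $a_{nn}s_n = n/2 + \sum_{j=1}^{n-1} a_{jn}s_j$ one obtains
$$a_{nn} r_n = \frac{n}{2} + C\bigl(M_n - n a_{nn}\bigr) + \sum_{j=1}^{n-1} a_{jn} r_j, \qquad M_n := \sum_{j=1}^{n-1} j\, a_{jn}.$$
Evaluating $M_n$ by writing $j\binom{n}{j-1} = \binom{n}{j-1} + n\binom{n-1}{j-2}$ and closing the resulting binomial sums, the integrand collapses and, after subtracting $n a_{nn}$, one finds
$$M_n - n a_{nn} = \int_0^1 \frac{1 - n\lambda - (1-\lambda)^n}{\lambda^2}\,d\Lambda(\lambda) = A^{-1} - n\!\int_0^1 \lambda^{-1}\,d\Lambda + O(\gamma^n),$$
with $\gamma := 1-\lambda_0 < 1$ since $\mathrm{supp}\,\Lambda \subset [\lambda_0, 1]$. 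Consequently the coefficient of $n$ on the right-hand side becomes $\tfrac{1}{2} - C\!\int\lambda^{-1}d\Lambda$, which vanishes precisely when $C = 1/(2\!\int \lambda^{-1}d\Lambda)$; a short rearrangement verifies this agrees with the stated $C = A/\bigl(2 - 2A\!\int \lambda^{-2}(1-\lambda)d\Lambda\bigr)$. With this $C$ the linear-in-$n$ driving term cancels, and dividing through by $a_{nn} = A^{-1} + O(\gamma^n)$ yields the clean error recursion
$$r_n = C + \sum_{j=1}^{n-1} p_j^{(n)}\, r_j + O(\gamma^n), \qquad p_j^{(n)} := a_{jn}/a_{nn},$$
where $\{p_j^{(n)}\}$ is a probability distribution on $\{1,\ldots,n-1\}$.

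Next I would establish a concentration bound for $p_j^{(n)}$. Rewriting
$$a_{jn} = \int_0^1 \lambda^{-2}\,\binom{n}{j-1}(1-\lambda)^{j-1}\lambda^{n-j+1}\,d\Lambda(\lambda),$$
the inner factor is exactly the $\mathrm{Binomial}(n,1-\lambda)$ probability mass at $j-1$. Since $1-\lambda \leq 1-\lambda_0$ uniformly on $\mathrm{supp}\,\Lambda$, a Chernoff inequality gives, for any $\alpha > 1-\lambda_0$, a rate $I_\alpha > 0$ such that $\Pr[X \geq \alpha n] \leq e^{-I_\alpha n}$ for $X \sim \mathrm{Binomial}(n,1-\lambda)$, uniformly in $\lambda \in [\lambda_0,1]$. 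Integrating against $\lambda^{-2}\,d\Lambda \leq \lambda_0^{-2}\,d\Lambda$ and dividing by $a_{nn} \to A^{-1} > 0$ yields $\sum_{j > \alpha n} p_j^{(n)} = O(e^{-I_\alpha n})$. Fixing $\alpha := 1 - \lambda_0/2 \in (0,1)$ and splitting $\sum p_j^{(n)}\log j$ at $j = \alpha n$ (using $\log j \leq \log(\alpha n)$ on the bulk and the trivial $\log j \leq \log n$ on the exponentially small tail) gives
$$\sum_{j=1}^{n-1} p_j^{(n)}\log j \;\leq\; \log n + \log\alpha + o(1),$$
and crucially $\log\alpha < 0$.

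Finally I close the argument by induction on $n$. Assume $|r_j| \leq a\log j + b$ for all $j < n$. Feeding this into the error recursion,
$$|r_n| \leq |C| + a\sum_{j=1}^{n-1} p_j^{(n)}\log j + b + O(\gamma^n) \leq a\log n + b + \bigl[|C| + a\log\alpha + o(1)\bigr].$$
Choosing $a := 2|C|/|\log\alpha|$ makes the bracketed term strictly negative for all $n \geq n_0$, closing the induction for large $n$; the constant $b$ is then taken large enough to absorb the finitely many base cases (in particular $|r_1| = |C|$). This gives $r_n = O(\log n)$ and hence $s_n = Cn + O(\log n)$. The hardest step is the concentration estimate: without the strictly positive gap $-\log\alpha$ afforded by the hypothesis that $\mathrm{supp}\,\Lambda$ is bounded away from zero, the constant driving term $|C|$ in the error recursion would accumulate linearly in $n$ rather than be absorbed by a logarithmic bound, and the conclusion would fail --- consistent with the intermediate, sub-linear growth expected for processes such as the Beta-coalescents that place mass near zero.
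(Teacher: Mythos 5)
Your proof is correct, and it shares the paper's overall skeleton: substitute the linear ansatz into the recursion $s_n = \bigl(n/2 + \sum_{j=1}^{n-1} a_{jn}s_j\bigr)/a_{nn}$, choose $C$ so that the terms linear in $n$ cancel, and then run an induction on the resulting error recursion, whose kernel $p_j^{(n)} = a_{jn}/a_{nn}$ is a $\lambda^{-2}$-weighted mixture of $\mathrm{Binomial}(n,1-\lambda)$ laws. You differ in both technical steps, and in each case instructively. First, you evaluate $M_n - n\,a_{nn}$ in closed form via $j\binom{n}{j-1} = \binom{n}{j-1} + n\binom{n-1}{j-2}$, obtaining exactly $\int_0^1 \lambda^{-2}\bigl(1 - n\lambda - (1-\lambda)^n\bigr)\,d\Lambda$ (I verified this identity; it is exact), whereas the paper completes the binomial series, bounds the added terms exponentially, and invokes the binomial mean. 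Your bookkeeping is tidier and makes transparent both why $C = \bigl(2\int\lambda^{-1}d\Lambda\bigr)^{-1}$, which indeed equals the stated $A/\bigl(2-2A\int\lambda^{-2}(1-\lambda)\,d\Lambda\bigr)$, and also that the driving term in the error recursion is the \emph{constant} $C$ --- the paper's ``$O(1/n)$'' at the analogous point is a small slip (it should be $O(1)$, since the $+1$ in $\mathbb{E}[X+1]=n(1-\lambda)+1$ contributes $C$), harmless there for the same reason your constant is harmless here. Second, and more substantively, your inductive step uses a Chernoff tail bound, uniform over $\lambda \in [\lambda_0,1]$, splitting $\sum_j p_j^{(n)}\log j$ at $j = \alpha n$ to extract the strictly negative gap $\log\alpha$; the paper instead extracts the analogous gap $\log(1-\epsilon)$, with $\epsilon < \inf \operatorname{supp}\Lambda$, from two applications of Jensen's inequality ($\mathbb{E}\log \leq \log\mathbb{E}$, once for the binomial and once for the mixture with weights $A\lambda^{-2}d\Lambda$). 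Both arguments rest on the same mechanism --- the kernel's mean $(1-\lambda)n \leq (1-\lambda_0)n$ is a strict contraction of $n$, so a logarithmic bound absorbs a bounded driving term --- but Jensen gets there in two lines, while your concentration route is more robust: it controls $|r_n|$ two-sidedly in one pass (the paper's induction as written establishes only the upper bound $g_n \leq B\log n$), and your closing remark correctly identifies that the entire argument collapses without the gap furnished by $\operatorname{supp}\Lambda \subset [\lambda_0,1]$, consistent with the sublinear growth expected for Beta-type measures with mass near zero. One trivial nitpick: your $\gamma = 1-\lambda_0$ should be enlarged slightly (any $\gamma' \in (1-\lambda_0,1)$) to absorb polynomial prefactors such as the $n\lambda(1-\lambda)^{n-1}$ term appearing in $a_{nn}$ and in the completed binomial sums; this affects nothing downstream.
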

\begin{proof}
Let $s_n = Cn + g_n$.  We find, using the explicit expressions (\ref{aj2}),
\begin{equation}\label{gb}
Cn + g_n = An/2 + A \sum_{j=1}^{n-1} \int_0^1 \lambda^{-2} \binom{n}{j-1} \lambda^{n-(j-1)} (1-\lambda)^{j-1} \, d\Lambda \cdot (Cj + g_j) + O(\gamma^n)
\end{equation}
The right-hand side of the above is
\begin{equation}
An/2 + A \sum_{j=1}^{n+1} \int_0^1 Cj\lambda^{-2} \binom{n}{j-1} \lambda^{n-(j-1)} (1-\lambda)^{j-1} d\Lambda  + A \sum_{j=1}^{n-1} \int_0^1 g_j \lambda^{-2} \binom{n}{j-1} \lambda^{n-(j-1)} (1-\lambda)^{j-1} d\Lambda  + O(\gamma^n)
\end{equation}
where we have subtracted two binomial terms and exponentially bounded them.  Using the formula for the mean of a binomial random variable with parameters $(n, 1-\lambda)$ in the first series, this is equivalent to 
\begin{equation}
An/2 + An \int_0^1 C(1-\lambda)\lambda^{-2}d\Lambda +  A \sum_{j=1}^{n-1} \int_0^1 g_j \lambda^{-2} \binom{n}{j-1} \lambda^{n-(j-1)} (1-\lambda)^{j-1} d\Lambda+ O(1)
\end{equation}
Inserting this expression into (\ref{gb}), dividing by $n$, and using the definition of $C$, we find that $g_n$ satisfies the relation
\begin{equation}
g_n = A \sum_{j=1}^{n-1}\int_0^1  g_j \lambda^{-2} \binom{n}{j-1} \lambda^{n-(j-1)} (1-\lambda)^{j-1} d\Lambda+ O(1/n)
\end{equation}
Completing the series with two binomial terms once more, 
\begin{equation}\label{rr}
g_n = A\int_0^1  \lambda^{-2} (\mathbb{E}_{\lambda,n} g)\, d\Lambda + O(1/n)
\end{equation}
where the expectation occurs with respect to a binomial random variable with parameters $(n, 1-\lambda)$.  We may further interpret the integral term as an expectation of $g$ under a mixture of binomials with mixing weights $1/\lambda^2$. 

 We can now prove that that $g_n = O(\log n)$.  To be explicit in (\ref{rr}), let $B$ be a constant such that
 \begin{equation}
\left| g_n - A\int_0^1  \lambda^{-2} (\mathbb{E}_{\lambda,n} g)\, d\Lambda \right| \leq B/n 
 \end{equation}
 for all $n$.
We shall show, by induction, that  $g_n \leq B \log n $, for large $n$.  By enlargening $B$, the base case $g_0$ is satisfied.  Assume that the statement it is true up to $n$.  Then using Jensen's inequality on the mixture in (\ref{rr}),
\begin{equation}
g_{n+1} \leq B\log \left( A \int_0^1 \lambda^{-2} (1-\lambda) n \, d\Lambda \right) +B/n
\end{equation}
If $\epsilon>0$ is a number smaller than $\inf \text{supp } \Lambda$, then we can, keeping in mind the definition of $A$, further bound this expression  by:
\begin{align}
g_{n+1} &\leq B \log ((1-\epsilon) n) + B/n\\
&=B \log ((1-\epsilon)n) + B \log (n+1) - B\log (n) + O(1/n^2)\\
&\leq B\log(n+1)
\end{align}
Thus Theorem 2 is proved.
\end{proof}
The constant $C$ in Theorem 2 is algebraically equivalent to $C(\Lambda)$ in (\ref{asympform}).

\subsection{An Optimization Principle for the average number of segregating sites.}

In this section we prove the following theorem:
\begin{theorem} For every sample size $n$, and for fixed mutation rate $\Theta$, the minimum and maximum values of $\mathbb{E}S_n$ over the class of $\Lambda$ processes is achieved within the class of pure Eldon-Wakeley processes; that is, where $\Lambda = \delta_\lambda$, for $0 \leq \lambda \leq 1$.  
\end{theorem}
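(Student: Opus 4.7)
The plan is to reduce the optimization problem to a structural claim about the map $\Lambda \mapsto s_n(\Lambda) := \mathbb{E}S_n/\Theta$: for each fixed $n$, this map is representable as a ratio of two linear functionals of $\Lambda$. Concretely, the goal is to exhibit polynomials $f_n, g_n$ on $[0,1]$ with $g_n>0$ such that
\[ s_n(\Lambda) \;=\; \int_0^1 f_n(\lambda)\, d\Lambda(\lambda) \bigg/ \int_0^1 g_n(\lambda)\, d\Lambda(\lambda). \]
Once this is established, the theorem follows at once: for any probability measure $\Lambda$ on $[0,1]$ one has
\[ \min_{\lambda \in [0,1]} \frac{f_n(\lambda)}{g_n(\lambda)} \;\le\; \frac{\int f_n\,d\Lambda}{\int g_n\,d\Lambda} \;\le\; \max_{\lambda \in [0,1]} \frac{f_n(\lambda)}{g_n(\lambda)}, \]
with both extremes attained by Dirac measures $\delta_{\lambda^*}$ at the corresponding extremizers of $f_n/g_n$ on $[0,1]$. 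Thus the extrema of $s_n$ over all probability measures are attained within the pure Eldon-Wakeley subclass $\{\delta_\lambda : \lambda \in [0,1]\}$.

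I would establish the ratio representation by strong induction on $n$, using the recursion~(\ref{srecur}). The base cases $n = 1, 2, 3$ are trivial since $s_1 = 0$, $s_2 = 1$, and $s_3 = 3/2$ are constant in $\Lambda$, and the explicit expressions displayed after~(\ref{srecur}) verify the representation through $n = 6$. For the inductive step, assume $s_j = \langle f_j \rangle / \langle g_j \rangle$ for all $j<n$, where $\langle h \rangle := \int_0^1 h(\lambda)\, d\Lambda(\lambda)$. Substituting these into~(\ref{srecur}) and clearing the inductive denominators yields $s_n$ as a ratio of expressions that are a priori multilinear (polynomial in the moments) in $\Lambda$. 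The inductive step is then completed by constructing $f_n$ and $g_n$ explicitly from $\{f_j, g_j\}_{j<n}$ together with the coefficient polynomials from~(\ref{aj1})--(\ref{aj2}), and verifying the polynomial identity that forces the multilinear cross-terms to cancel, leaving a single-integral numerator and denominator.

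The hard part is precisely this algebraic cancellation: the naive expansion of the recursion produces a ratio of polynomials whose degree in the moments of $\Lambda$ grows with $n$, and the theorem demands it collapse to degree one in both numerator and denominator. A cleaner conceptual route I would also pursue is via the $\Lambda$-coalescent dual: since the polynomial appearing in $a_{nn}$ is exactly $\sum_{k=2}^n \binom{n}{k}\lambda^{k-2}(1-\lambda)^{n-k}$, i.e.\ the total merger rate of the $\Lambda$-coalescent at $n$ blocks, the recursion~(\ref{srecur}) coincides (up to a factor of two) with the standard first-step-analysis recursion for $\mathbb{E}T_n$, the expected total tree length of the $\Lambda$-coalescent started from $n$ lineages. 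A Markov-chain occupation-time representation of $\mathbb{E}T_n$ that manifestly expresses it as a ratio of linear $\Lambda$-functionals would yield the theorem immediately and bypass the inductive algebra entirely.
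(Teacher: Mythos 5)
Your plan is structurally identical to the paper's own proof: the paper likewise reduces the theorem to the representation $s_n(\Lambda)=\int_0^1 f_n\,d\Lambda \,/\int_0^1 g_n\,d\Lambda$ and then applies exactly your averaging inequality (its Lemma~2, proved for two-point measures, then finitely supported measures, then weak limits). Where you differ is in honesty about step one: the paper dismisses it with ``it is evident from the recursion,'' while you correctly isolate the multilinear cancellation as the unproven core. The problem is that this core step is not merely hard --- it is \emph{false} for $n\ge 6$, so neither your induction nor your coalescent occupation-time alternative can close. To see this, take the segment $\Lambda_t=t\delta_0+(1-t)\delta_1$, $t\in[0,1]$. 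On this family the coefficients (\ref{aj1})--(\ref{aj2}) collapse: $a_{jn}=0$ for $2\le j\le n-2$ (each kernel contains both a factor $\lambda$ and a factor $1-\lambda$), $a_{n-1,n}=\binom{n}{2}t$, and $a_{nn}=\binom{n}{2}t+(1-t)$, so the recursion (\ref{srecur}) gives exactly
\begin{equation*}
s_4(t)=\frac{2+9t}{1+5t},\qquad s_5(t)=\frac{5(1+4t)}{2(1+5t)},\qquad s_6(t)=\frac{3(100t^2+35t+2)}{2(1+5t)(1+14t)},
\end{equation*}
the last fraction in lowest terms (neither $t=-1/5$ nor $t=-1/14$ is a root of the numerator: the values there are $-1$ and $1/98$). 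Any ratio of two linear functionals of $\Lambda$ restricts, on the segment $\{\Lambda_t\}$, to a ratio of two \emph{affine} functions of $t$; a reduced quadratic-over-quadratic is not of that form, so no polynomials $f_6,g_6$ can exist. For $n=4,5$ the cancellation you hoped for does occur --- note the factor $1+9t$ cancels in $s_5$ --- which is why the paper's displayed formulas through $\mathbb{E}S_5$ really are ratios of single integrals; but the pattern breaks at $n=6$. Your dual identification is correct ($a_{jn}$ is the $\Lambda$-coalescent rate of jumping from $n$ to $j$ blocks, $a_{nn}$ the total rate, and $s_n=\tfrac12\mathbb{E}T_n$), but by the same counterexample $\mathbb{E}T_6$ is itself not a ratio of linear $\Lambda$-functionals, so no occupation-time representation of the kind you seek can exist either.

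You should know that this defect is inherited from, not introduced by, your reading of the paper: the paper's ``evident'' claim is unproven and false as stated, and its displayed $\mathbb{E}S_6$ is valid only at Dirac measures --- at $\Lambda_{1/2}=\tfrac12\delta_0+\tfrac12\delta_1$ the displayed formula gives $\tfrac{417}{182}\Theta$ while the recursion (\ref{srecur}) gives $\tfrac{267}{112}\Theta$. The theorem's \emph{conclusion} may well still be true: on the family above, $\tfrac{d}{dt}s_6$ has the sign of $-(550t^2+80t+3)<0$, so the extrema do sit at the endpoints, consistent with Dirac-extremality. But proving it requires a genuinely different idea --- e.g., a direct monotonicity or exchange argument on the recursion (\ref{srecur}), or a variational argument over measures that does not pass through a degree-one representation. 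A secondary, now moot, point: even where the representation holds, attainment of $\min$ and $\max$ of $f_n/g_n$ at a pure process requires $g_n>0$ on all of $[0,1]$, which you assume but the paper's positivity argument only delivers as $g_n\ge 0$.
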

\begin{proof} It is evident from the recursion for $\mathbb{E}S_n$ in (\ref{snr}) that the average number of segregating sites must have the form:
\begin{equation}
\mathbb{E}S_n/\Theta = \frac{\int_0^1 f_n(\lambda)\, d\Lambda}{\int_0^1 g_n(\lambda)\, d\Lambda}
\end{equation}
for some functions (in fact, polynomials) $f_n, g_n$.  Because $\mathbb{E}S_n$ is positive for every pure Eldon-Wakeley process (where $\Lambda = \delta_\lambda$), we can without loss of generality assume that $f_n, g_n \geq 0$.  For such functions, we have the lemma:
\begin{lemma} For positive functions $f,g$ defined on $[0,1]$, we have the inequalities:
\begin{equation}
\min_\lambda \frac{f(\lambda)}{g(\lambda)} \leq \frac{\int_0^1 f(\lambda)\,d\Lambda}{\int_0^1 g(\lambda)\, d\Lambda} \leq \max_\lambda \frac{f(\lambda)}{g(\lambda)}
\end{equation}
\end{lemma}
\begin{proof} We prove the lower bound; the upper bound is established in the same way.  Say $\Lambda$ is a two-point measure: $\Lambda = p_1 \delta_{\lambda_1} + p_2 \delta_{\lambda_2}$.  Then elementary manipulations show
\begin{equation}
\frac{p_1 f(\lambda_1) + p_2 f(\lambda_2)}{p_1 g(\lambda_1) + p_2 g(\lambda_2)} \geq \text{min } \{f(\lambda_1)/g(\lambda_1), f(\lambda_2)/g(\lambda_2)\} 
\end{equation}
By induction one easily generalize to measures concentrated at any finite number of points.  Finally, the full case is obtained by taking weak limits of measures concentrated at a finite number of points.
\end{proof}

Returning to the proof of the main theorem, one sees that Lemma 2 immediately implies the result, since $f_n(\lambda)/g_n(\lambda)$ is precisely $\mathbb{E}S_n/\Theta$ for the case $\Lambda = \delta_\lambda$.  
\end{proof}

The optimization principle can be refined.  Let $\Lambda$ be any probability
measure whose support excludes a neighborhood of zero, and suppose that
$\lambda_{\text{min}}$ and $\lambda_{\text{max}}$ are the smallest and largest
values, respectively, on which $\Lambda$ is supported, i.e. $\lambda_{\text{min}}
= \inf  \text{supp } \Lambda$, and $\lambda_{\text{max}} = \sup \text{supp }
\Lambda$.  Then
\begin{equation}
\frac{\lambda_{\text{min}}}{2} \leq C(\Lambda) \leq \frac{\lambda_{\text{max}}}{2}
\end{equation}
In conjunction with (\ref{asympform}), this shows that the asymptotic growth rate
in $\mathbb{E}S_n$ for a $\Lambda$ -process whose drift measure is supported on
the interval $[\lambda_{min}, \lambda_{max}]$ can be lower and upper-bounded by
the rates of growth in $\mathbb{E}S_n$ of two  pure  processes, with
parameters $\lambda_{min}$ and $\lambda_{max}$, respectively.  In other words, the
effect of mixing any two pure $\Lambda$ processes always results  in a process
whose equilibrium diversity is intermediate relative to the diversities of the
pure models.

\section{Acknowledgments}
The authors are grateful to Warren Ewens and Charles Epstein for many fruitful discussions.

\bibliography{paper7}
\bibliographystyle{mychicago}

\end{document}